%
% Arxiv v4 July 3, 2021
%
 \UseRawInputEncoding
 \documentclass[11pt, reqno]{article}
\hfuzz2pc
\textheight=8.7truein
\textwidth=6.5truein
\voffset=-.5truein
\hoffset=-.5truein

\usepackage[leqno]{amsmath}
\usepackage{amssymb,amsthm,graphicx,enumitem, mathtools}

\theoremstyle{definition}
\newtheorem{dfn}{Definition}[section]
\newtheorem{dfn*}{Definition}
\newtheorem{exam}[dfn]{Example}
\newtheorem{remark}[dfn]{Remark}

\theoremstyle{plain}
\newtheorem{prop}[dfn]{Proposition}
\newtheorem{prop*}[dfn*]{Proposition}

\usepackage{hyperref}
\usepackage{doi}
\usepackage{enumitem}
\usepackage{mathtools}

\usepackage{float}

\begin{document}

\title{On the Oval Shapes of Beach Stones}

\author{Theodore P. Hill}

\date{\vspace{-5ex}}  

\maketitle

\begin{abstract}
This article introduces a new geophysical theory, in the form of a single simple partial integro-differential equation, to explain how frictional abrasion alone of a stone on a planar beach can lead to the oval shapes observed empirically. The underlying idea in this theory is the intuitive observation that the rate of ablation at a point on the surface of the stone is proportional to the product of the curvature of the stone at that point and how often the stone is likely to be in contact with the beach at that point. Specifically, key roles in this new model are played by both the random wave process and the global (non-local) shape of the stone, i.e., its shape away from the point of contact with the beach. 
The underlying physical mechanism for this process is the conversion of energy from the wave process into potential energy of the stone. No closed-form or even asymptotic solution is known for the basic equation, even in a 2-dimensional setting, but basic numerical solutions are presented in both the deterministic continuous-time setting using standard curve-shortening algorithms, and a stochastic discrete-time polyhedral-slicing setting using Monte Carlo simulation.

\vspace{1em}
 \noindent
 \textit{Mathematics Subject Classification (2010).} Primary 86A60, 53C44; Secondary 45K05, 35Q86
 
 \vspace{1em}
 \noindent
 \textit{Key words and phrases.} Curve shortening flow, partial integro-differential equation, support function,  Monte Carlo simulation, polyhedral approximation, frictional abrasion
 \end{abstract}
 
\section{Introduction}
``The esthetic shapes of mature beach pebbles",  as geologists have remarked,  ``have an irresistible fascination for sensitive mankind"  \cite{stones16}. This fascination dates back at least to Aristotle (\cite{stones30}; see \cite{stones21}), and has often been discussed in the scientific literature (e.g., \cite{stones17}, \cite{stones50}, \cite{stones18}, \cite{stones19},  \cite{stones45}, \cite{stones20},  \cite{stones4}, \cite{stones39}, \cite{stones22},  \cite{stones9},  \cite{stones29},  \cite{stones24},  and \cite{stones43}). 

Various mathematical models for the evolving shapes of 2- and 3-dimensional ``stones", both purely mathematical models on curve-shortening flows and physical models under frictional abrasion, contain hypotheses guaranteeing that the shapes will become spherical in the limit (e.g., \cite{stones12}, \cite{stones27},   \cite{stones7},  \cite{stones3}, and \cite{stones14}). 
Observations of beach stones in nature, however, suggest that the ``esthetically fascinating" shapes of beach stones are almost never spherical. Instead, real beach stones and artificial pebbles from laboratory experiments typically have elongated oval shapes (e.g., see Figures \ref{fig7a} and \ref{fig6}).
Furthermore, in his analysis of these oval shapes (see Figure \ref{fig0}) Black reported that this ``ovoid shape seems to be taken by all sorts of stones, from the soft sandstone to the hard quartzite, and may therefore be independent of mineral composition, or relative hardness of the stone" \cite[p.~122]{stones50}.

 The main goal of this paper is to introduce a simple mathematical equation based on physically intuitive heuristics that may help explain the limiting (non-elliptical) oval shapes of stones wearing down solely by frictional abrasion by waves on a flat sandy beach.  Although very easy to state, this new equation is technically challenging and no closed-form solution is known to the author for most starting stone shapes or distributions of wave energies, even in a 2-d setting. 
 
On the other hand, two different types of numerical approximations of solutions of this equation for various starting shapes indicate promising conformity with the classical experimental and empirical shapes of beach stones found by Lord Rayleigh (son and biographer of Nobelist Lord Rayleigh). One type of numerical solution of the equation models the evolving shapes of various isolated beach stones in a deterministic continuous-time setting using standard  techniques for solving curve-shortening problems, and the other type uses Monte Carlo simulation to approximate typical changes in the stone shape in a discrete-time discrete-state setting. 

\begin{figure}[!htb]
\center\includegraphics[width=0.65\textwidth]{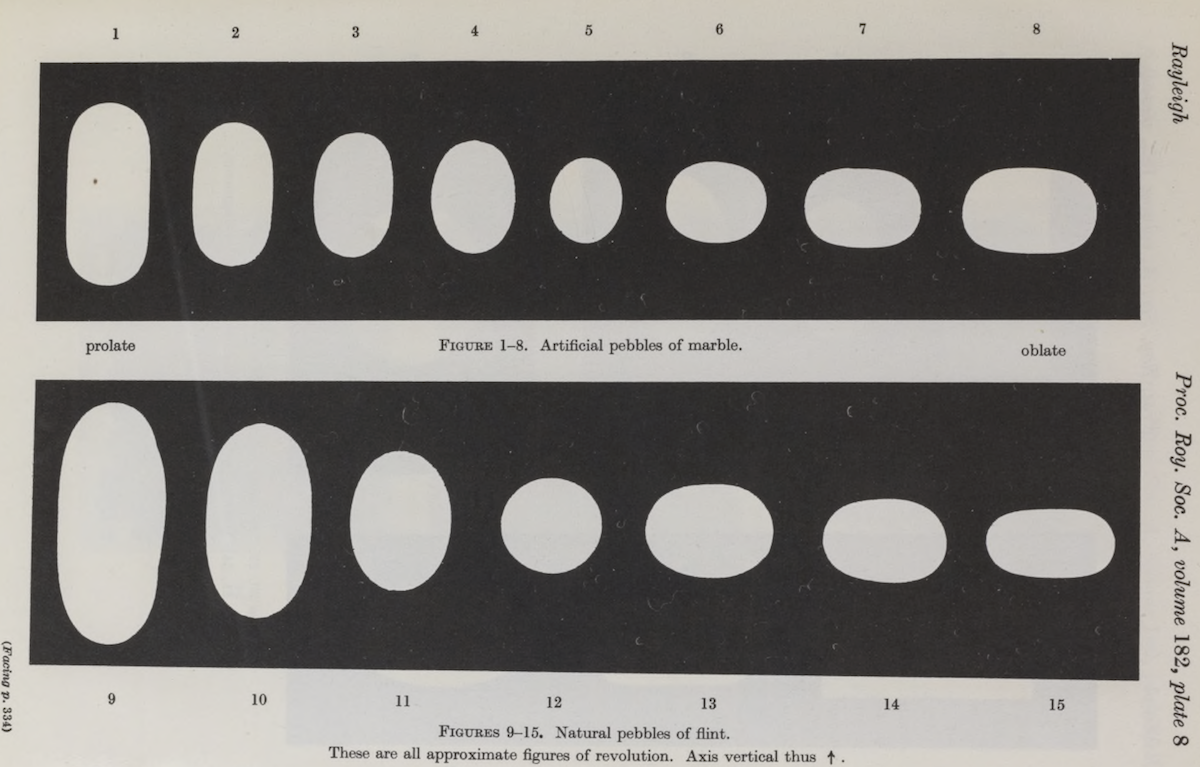}
\caption{Examples of artificial pebbles of marble (top row) abraded in his laboratory, and natural pebbles of flint (bottom row) 
reported by Lord Rayleigh in 1944 \cite{stones10}.}
 \label{fig7a}
\end{figure}

 \begin{figure}[!htb]
\center\includegraphics[width=0.75\textwidth]{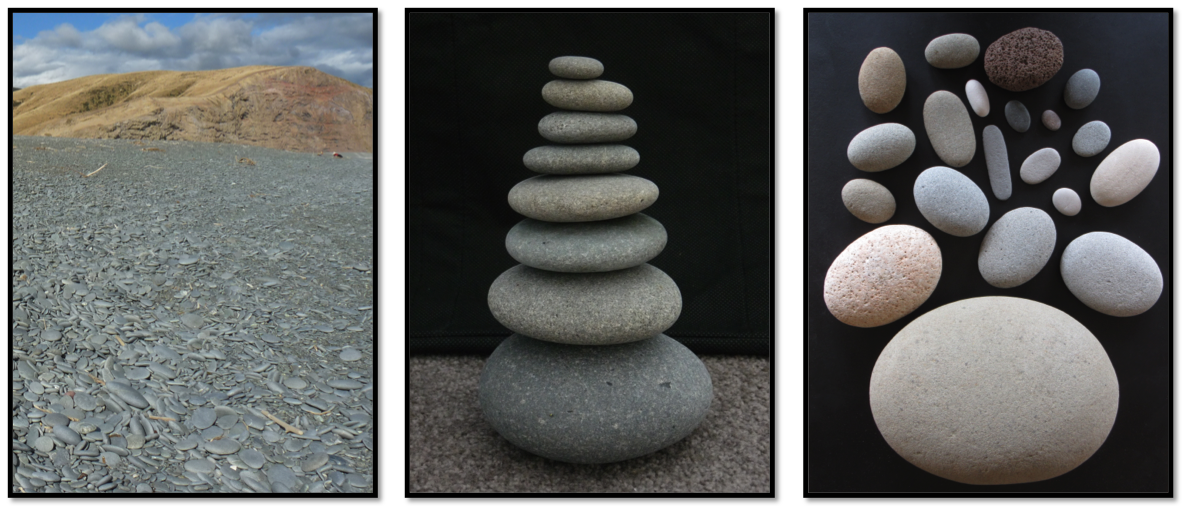}
\caption{Modern beach stones: stones on a beach in the Banks peninsula of New Zealand (left); beach stones collected from a different beach on South Island by A. Berger (center); and beach stones collected by the author on several continents (right; the largest is about 30 cm long, and weighs about 13 kg).}
 \label{fig6}
\end{figure} 

\begin{figure}[!htb]
\center\includegraphics[width=0.85\textwidth]{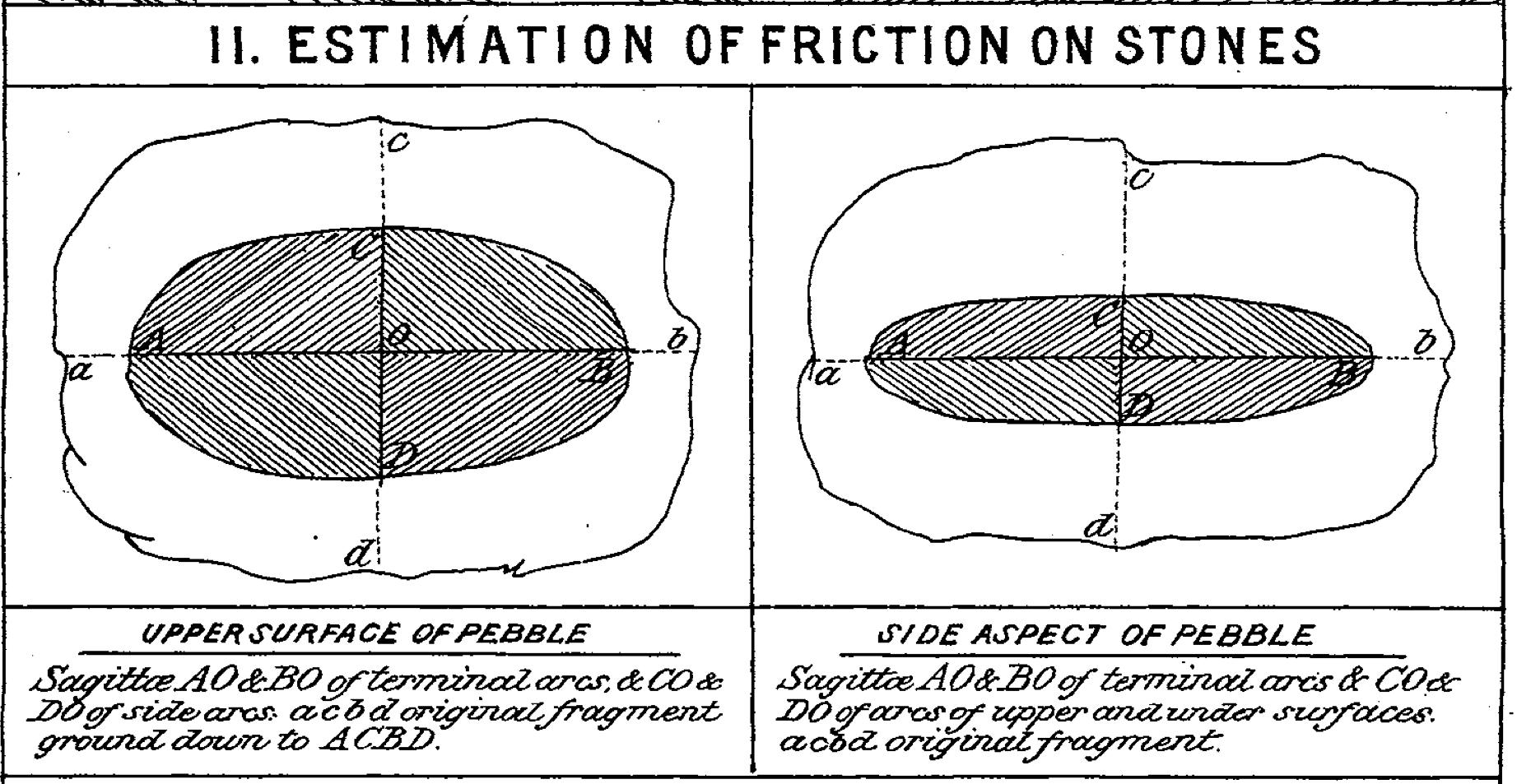}
\caption{Sketch by Black in 1877 illustrating typical dimensions in the top view (left) and side view (right) of a hypothetical worn beach stone  \cite{stones50}.}
 \label{fig0}
\end{figure}

  \begin{remark}\label{rem2}
As observed by Krynine, ``on the seashore the similar pebbles are seen in the same places" \cite{stones21}, and evidence of this is also apparent in Figure \ref{fig6}. Note that stones from the same beach (left and center) appear to have roughly the same shape independent of size - smaller stones do not appear to be becoming spherical or cigar-shaped.  However, shapes of stones from different beaches (right) may vary significantly. In fact, the new model presented below predicts exactly this behavior - that the shapes of stones on the same beach, i.e., subject to the same wave action, tend to evolve toward the same shape, independent of size; see Example \ref{exam1J23} below. 
\end{remark}
 
 This paper is organized as follows: Section \ref{section2} provides an overview of several standard distance-driven and curvature-driven isotropic models of frictional abrasion of stones, with graphical depiction of numerical solutions of each in the 2-d setting; Section \ref{section3} introduces a non-isotropic curvature and contact-likelihood model of frictional abrasion of beach stones; Section \ref{specialCase} introduces a special case of this model, with the new geophysical theory and associated equation; Section \ref{waveDynamics} contains the definition and essential assumptions concerning the underlying wave process;  Section \ref{section5} relates the wave process to the contact-likelihood function;  Section \ref{sectionContactTime} analyzes the shape evolution in the context where both the contact-likelihood function and abrasion are continuous;  Section \ref{sectionMonteCarlo}  analyzes the analogous shape evolution in a discrete time and abrasion setting;  Section \ref{sectionLimitingShapes} addresses limiting shapes of abraded stones under the new non-isotropic frictional abrasion model; and  Section \ref{sectionConclusions} contains a short conclusion.
 
\section{Classical Isotropic Frictional Abrasion Models}\label{section2}

In trying to model the evolving shapes of beach stones, Aristotle conjectured that spherical shapes dominate (see \cite{stones8}).  In support of his theory, he proposed that the inward rate of abrasion in a given direction is an increasing function of the distance from the center of mass of the stone to the tangent plane (the beach) in that direction, the intuition being that the further from the center of mass a point is, the more likely incremental pieces are to be worn off, since the moment arm is larger. 

 Aristotle's model may be formalized as follows: (cf. \cite[equation (1.1)]{stones8}; \cite[equation (1)]{stones45}), 
\begin{equation}
\label{may7eq3}
 \frac{{\partial h}}{{\partial t}} = - f(h),
 \end{equation}
 \begin{align*}
 & \text{where } f \text{ is an increasing function of the distance } h=h(t, u)  \text{ from the center  }  \\
 & \text{ of mass of the stone to the tangent plane in unit direction } u \text{ at time } t.
\end{align*}
 \begin{figure}[!htb]
\center\includegraphics[width=0.71\textwidth]{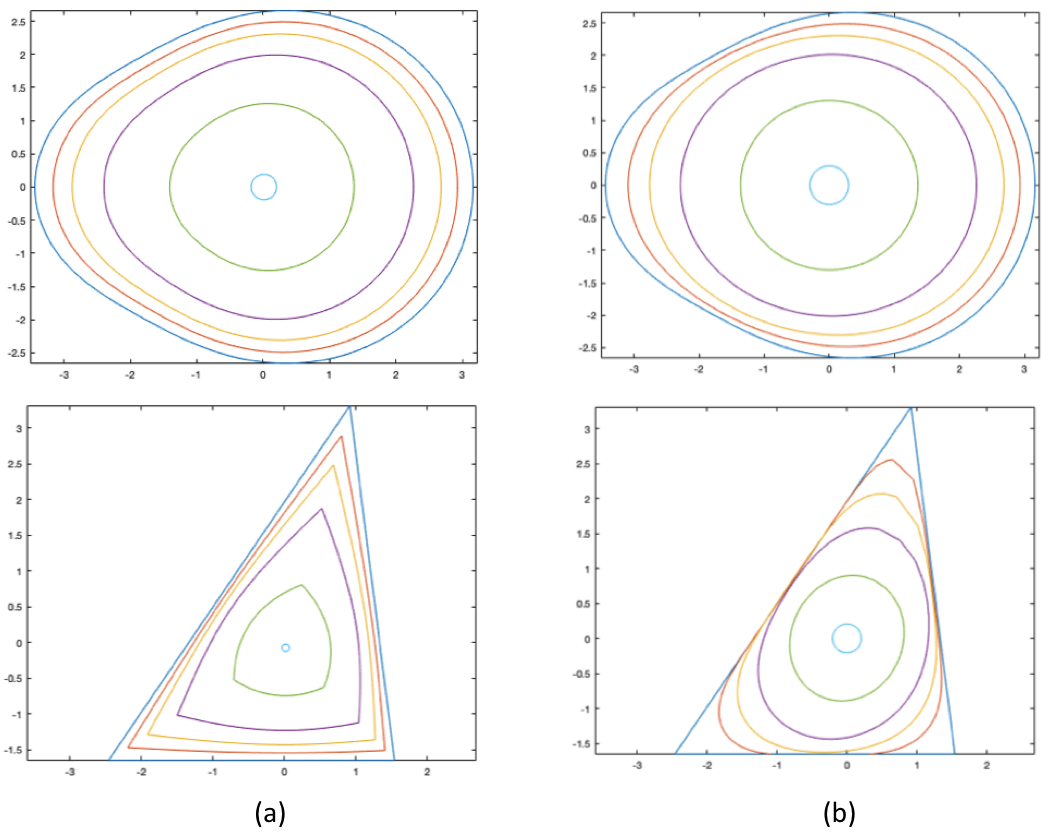}
\caption{Numerical simulations* of an egg-shaped and triangular ``stone" evolving (a) under equation (\ref{may7eq3}) with $f(h)=h^2$, and (b) under  (\ref{eq1}).}
 \label{newFig3}
\end{figure}

*NOTE: The curves in Figure \ref{newFig3} and the other figures below were generated in Matlab using standard semi-implicit curve-shortening techniques (as described in \cite{stones35}) and random number (Monte Carlo) generators. These only yield basic approximations of the evolving shapes, not state-of-the-art renditions. The pseudocode can be found in the Appendix, and the Matlab code for these numerics may be downloaded at \url{https://hill.math.gatech.edu/DOCUMENTS/stonesCode2.zip}.
\vspace{1em}

Figure \ref{newFig3}(a) illustrates numerical solutions of equation (\ref{may7eq3}) in the 2-dimensional setting for triangular and egg-shaped starting shapes for the function $f(h) = h^2$; note that in this case shapes appear to become circular in the limit, which is not the case for some other choices of $f$, such as $f(h) = h$, as will be seen below.
Under this model (\ref{may7eq3}), the further from the center of mass, the faster the stone is eroding. As noted in \cite{stones8}, since the location of the center of gravity is determined by time-dependent integrals, (\ref{may7eq3}) is a non-local (cf.~\cite{stones48}) partial integro-differential equation. 

Modern mathematical models for the evolving shapes of stones often assume, as Aristotle did, that the ablation is normal to the surface of the stone, but unlike Aristotle, they assume that the rate of ablation is proportional to the curvature at the point of contact.  Both Aristotle's and these modern models also assume that the stones are undergoing isotropic abrasion, i.e., the stones are being abraded uniformly from all directions, and each point on the surface of a convex stone is equally likely to be in contact with the abrasive plane. A typical real-life example of isotropic frictional abrasion of a stone is the abrasion of a single stone in a standard rock tumbler.

The assumption that the rate of abrasion at a given point on the surface of the stone is  proportional to the curvature at that point is  analogous to the assumption that equal volumes (areas) are ablated in equal time (see Figure \ref{fig1}). This is physically realistic in that sharp points tend to erode more rapidly than flat regions. Note that under the assumption that the inward rate of abrasion is proportional to the curvature, the stone in Figure \ref{fig1} will erode inward at rates less rapidly from A to C. 

Taking the constant of proportionality to be 1, and using the notation of \cite{stones3}, the basic assumption that the rate of ablation is proportional only to the curvature at the point of contact yields the classical \textit{curvature-driven geometric flow}, the local geometric PDE 
\begin{equation} 
\label{eq1}
\frac{\partial h_0}{\partial t} =  - \kappa 
 \end{equation}
 \begin{align*}
  & \text{where }  h_0 = h_0(t,u) \text{ is the distance from a fixed origin to the } \\
  & \text{ tangent plane in unit direction } u \text{ at time } t \text{ and } \kappa = \kappa(t,u) \text{ is the} \\
  &  \text{ (Gaussian) curvature of the body in unit direction } u \text{ at time } t. \\
 \end{align*}
\begin{figure}[!htb]
\center\includegraphics[width=0.85\textwidth]{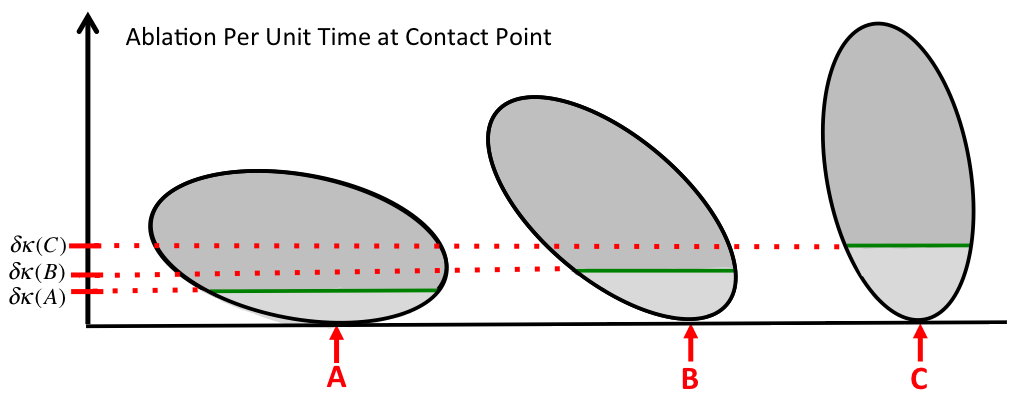}
\caption{In isotropic curvature-driven frictional abrasion models, ablation is assumed inward normal to the surface, at a rate proportional to the curvature at the point of contact. Thus if the curvature $\kappa (A)$ at the point of contact A is half that at C, $\kappa(C)$, the rate at which the surface is being eroded in the normal direction at A is half the rate at C. Note that in Aristotle's distance-driven model (\ref{may7eq3}) the relative rates of erosion here are also increasing from A to C, since the distances from the center of gravity to the point of contact with the abrasive surface are increasing from A to C.} 
 \label{fig1}
\end{figure}
The functions $h$ and $h_0$ in (\ref{may7eq3}) and (\ref{eq1}) are the \textit{support functions} of the stone (simple closed curve or surface) with the origin taken as the center of mass (barycenter) and with the origin fixed, respectively.  As is well-known,  the limiting (renormalized) support function $h_0$ under the curve-shortening flow (\ref{eq1}) is constant for 
essentially all (smooth) convex starting shapes (e.g., \cite{stones1}, \cite{stones2}, and \cite{stones3}). Since support functions uniquely determine convex bodies (e.g., \cite{stones13}), and since spheres are the only convex bodies with constant support functions (with the origin at the center), this implies that the shape of a convex stone eroding under (\ref{eq1}) becomes spherical in the limit; see Figure \ref{newFig3}(b).  (The interested reader is referred to \cite{stones54} for the evolution of shapes under an even broader class of curvature-driven geometric flows.) 

Thus the evolution of shapes of stones under frictional abrasion in distance-driven models such as (\ref{may7eq3}) and curvature-driven models such as (\ref{eq1}) are both isotropic, and both are independent of the shape of the stone away from the point of contact with the beach as well as the underlying wave dynamics.

\section{Non-Isotropic Contact Likelihood}\label{section3}

In a physically realistic model of the evolving shape of a stone undergoing frictional abrasion with a beach, however, both the wave dynamics and the overall shape of the stone play significant roles in the abrasion process. Intuitively, for instance, if the waves are consistently very small the abrasion will be minimal and concentrated on the local stable side of the stone, making it flatter.  Under moderate wave action, however,  beach stones will become more rounded, as will be seen below. The basic assumption here is that whenever there is moving contact of the stone with the beach, friction will occur and the stone will be incrementally abraded at the point of contact.

As for the shape of the stone playing a role, Rayleigh noted that based on his observations in nature and in laboratory experiments, ``this abrasion cannot be merely a function of the local curvature" \cite[p.~207]{stones9}.  Firey similarly observed that the shape of the stone ``surely has a dynamic effect on the tumbling process and so on the distribution of contact directions at time $t$"  \cite[p.~1]{stones3}. 
The distance-driven and curvature-driven models (\ref{may7eq3}) and (\ref{eq1}) do not provide physically realistic frameworks for the evolving shapes of stones undergoing frictional abrasion on a flat beach simply because they \textit{are} isotropic, that is, they assume that abrasion of the stone is equally likely to occur in every direction regardless of the shape of the stone and the dynamics of the wave process. Thus, a more physically realistic model of the evolving shapes of beach stones under frictional abrasion will necessarily be \textit{non-isotropic}. 

In particular, in some models like (\ref{may7eq3}) (with $f(h) = h^\alpha$ for some $\alpha > 1$) and (\ref{eq1}), a spherical stone is in \textit{stable} (\textit{attracting}) \textit{equilibrium}, and any shape close to a sphere will become more spherical. Among real beach stones, however, researchers have reported that ``Pebbles never approach the spherical" \cite[p.~211]{stones29}, ``one will never find stones in spherical form" \cite[p.~1]{stones43}, and ``there is little or no tendency for a pebble of nearly spherical form to get nearer to the sphere" \cite[p.~169]{stones11}. In fact, Landon reported that ``round pebbles become flat" \cite[p.~437]{stones39} and Rayleigh observed ``a tendency to change away from a sphere" \cite[p.~114]{stones9}, i.e., that spheres are in \textit{unstable (repelling) equilibrium}.

To see intuitively how a sphere could be in unstable equilibrium under frictional abrasion alone, consider the thought experiment of the abrasion of a sphere as illustrated in Figure \ref{may23fig1}.
Initially, all points on the surface of the spherical stone are in equilibrium, and the abrasion is isotropic. But as soon as a small area has been ablated at a point on the surface, then that flattened direction is more likely to be in contact with the beach than any other direction, so the abrasion process now has become \textit{non-isotropic}.  That direction of contact with the beach has now entered stable equilibrium, as shown at point B in Figure \ref{may23fig1}.  Moreover, since the center of gravity of the ablated stone has now moved directly away from B, the point A is now also in stable equilibrium, and the stone is more likely to be ablated at A than at any other point except the B side. Thus, if a sphere is subject solely to frictional abrasion with a plane (the beach), the abrasion process will immediately become non-isotropic, and the stone will initially tend to flatten out on two opposite sides.
 \begin{figure}[!htb]
\center\includegraphics[width=0.85\textwidth]{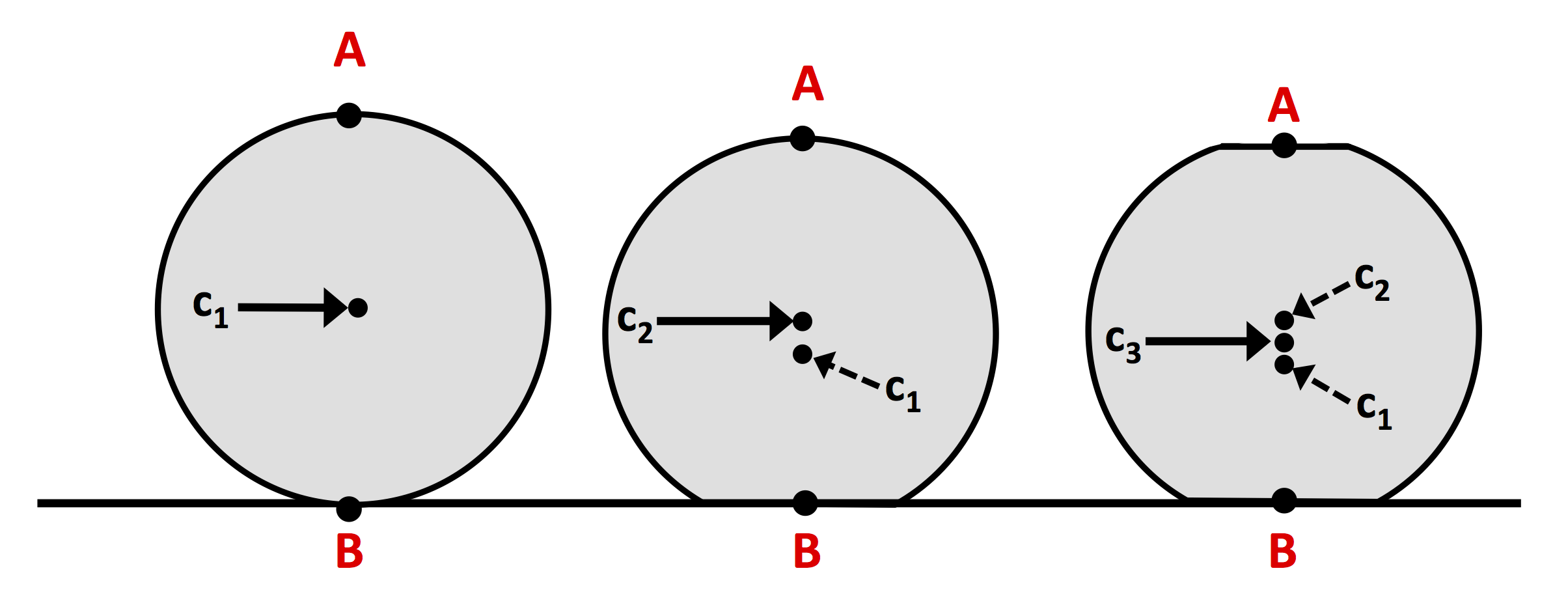}
\caption{In a spherical stone (left) all points on its surface are in unstable equilibrium, with identical curvatures. As  one side is ablated (center) that position now becomes in stable equilibrium, as does the point A diametrically opposite, and the abrasion process becomes non-isotropic; see text.  Hence the most likely directions for the stone to be ablated next are in directions A and B. The centers of gravity of the stones from left to right are at $c_1, c_2, c_3$, respectively.}
 \label{may23fig1}
\end{figure}

As mentioned above, in a stone undergoing frictional abrasion on a beach, not only the shape of the stone, but also the dynamics of the ocean (or lake) waves play a crucial role. If the waves are consistently very small, the stones will tend to rest in one stable position, and the low energy of the waves will cause the stones to grind down to a flat face on that side, much like a standard flat lap polisher is designed to do. The likelihood that other points on the surface of the stone will come into contact with the abrasive beach plane is very small. At the other extreme, if the waves are consistently huge, then it is likely that all exposed surface points of the stone will come into contact with the beach about equally often, i.e., the stone will be undergoing nearly isotropic abrasion as in a rock tumbler, and will become more spherical. 
 
In the non-isotropic model presented below, an isolated beach stone is eroding as it is being tossed about by incoming waves (e.g., the beach may be thought of as a plane of sandpaper set at a slight angle against the incoming waves), and the only process eroding the stone is frictional abrasion with the beach (e.g., no collisional or precipitation factors as in \cite{stones7} or \cite{stones5}). As with the curvature-driven model (\ref{eq1}) above, it is assumed that the \textit{rate of ablation per unit time} at the point of contact with the beach is proportional to its curvature at that point -- that is, sharp points will wear faster than flat regions.  Unlike a stone eroding in a rock tumbler, the likelihood that abrasive contact of a stone with a beach occurs in different directions generally depends on both the shape of the stone and the wave dynamics. That is, in any physically realistic model the ablation process is not isotropic. 

Here it is assumed that the energy required for the frictional abrasion of a beach stone is provided solely by the energy of the incoming waves, a time-dependent random process. Thus the point of contact of the stone with the beach is also a time-varying random variable, and if the inward ablation of a stone at a given point on its surface is an infinitesimal distance $d$ every time that point hits the abrasive surface (beach), then in $n$ hits at that point, the resulting inward abrasion will be $nd$, the \textit{product} of the inward rate and the number of times it is abraded at that point. 

Assuming that abrasion is proportional to curvature, this simple product principle implies that the expected rate of ablation at a point is the product of the curvature there and the average time that point is in contact with the beach, i.e., the likelihood of contact at that point (for details see Section \ref{section5} below).  This suggests the following conceptually natural \textit{curvature and contact-likelihood} equation:

\begin{align} 
\label{eq2}
\frac{{\partial h}}{{\partial t}} =  - \lambda \kappa 
 \end{align}
 \begin{align*}
 &  \text{where } h \text{ is as in (\ref{may7eq3}), }  \kappa  \text{ is as in (\ref{eq1}), and } 
 \lambda = \lambda(t,u) \text{ is the likelihood} \\
 & \text{ of abrasion in unit direction } u \text{ at time } t.
\end{align*}

\begin{figure}[!htb]
\center\includegraphics[width=1.0\textwidth]{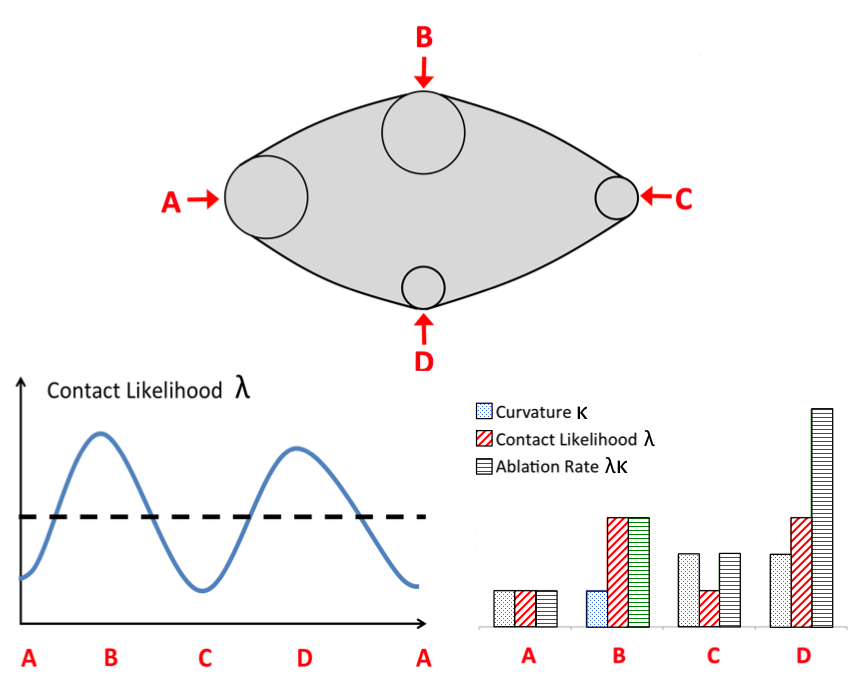}
\caption{The blue solid curve depicts a typical contact likelihood function for the 2-d ``stone" above under moderate wave action, and the dotted line represents the classical isotropic framework where all points on the surface of the stone are equally likely to be in contact with the beach. The graph at right depicts the curvatures, likelihood function values, and the ablation rates predicted by equation (\ref{eq2}) for the points A, B, C, and D. See text.}
 \label{newFig10}
\end{figure}

Figure \ref{newFig10} shows a hypothetical convex 2-d stone with the curvatures $\kappa$ at $A$ and $B$ equal and exactly half the curvatures at $C$ and $D$, as depicted by the osculating circles. In isotropic models of abrasion, where the likelihood function $\lambda$ is constant (dotted black line), the stone is assumed to be in contact with the beach with equal frequency (likelihood) at all points on its surface. In the non-isotropic setting shown (solid blue curve), however, $B$ and $D$ are more often in contact than $A$ or $C$. Under the curvature and contact-likelihood model (\ref{eq2}), the inward ablation rate at $D$ will be larger than the rates at $A$, $B$, and $C$. 

As will be seen in Section \ref{section5} below, $\lambda$ may be viewed as the \textit{local time} of the limiting \textit{occupation measure} (cf.~\cite{stones36}) of the time-dependent random process that reflects which direction the abrasive planar beach will be eroding the stone at that time.
This crucial contact-likelihood function $\lambda$ in (\ref{eq2}) may in general be very complicated since it can  depend on both the shape of the stone away from the point of contact, and on the dynamics of the underlying wave process.  

\section{An Illustrative Model} \label{specialCase}

In cases where the wave process and stone satisfy standard regularity conditions, as will be seen next, the contact-likelihood function $\lambda$ may sometimes be approximated by a very simple function of the distance from the center of mass of the stone. A concrete example of this is $\lambda = h^{-\alpha}$ for some $\alpha \geq 1$, in which case the basic non-isotropic principle (\ref{eq2}) becomes the non-linear partial integro-differential equation
\begin{equation}\label{eq3}
\frac{{\partial h}}{{\partial t}} =  - \frac{\kappa }{h^\alpha} \
\end{equation}
\begin{align*}
 & \text{where } h = h(t,u)  \text{ is as in (\ref{may7eq3}), }  \kappa  \text{ is as in (\ref{eq1}), and } \alpha \geq 1.
\end{align*}
\noindent
\textbf{Thought Experiment for Equation (\ref{eq3})}. \\
\indent
Consider a single fist-sized non-spherical convex stone (such as one of those in Figures \ref{fig7a} or \ref{fig6}) that is eroding by friction as it is rolled about by waves on a flat sandy beach. Assume that when a wave comes in, it rolls/lifts the stone to a point of contact with the beach such that the potential energy of the stone in that position is proportional to the wave crest height - larger waves lift the center of mass of the stone higher. 

Next note that if the expected number of waves needed until the stone comes into contact with the beach at point B on the surface of the stone is twice the expected number of waves needed until point A comes into contact, then over time point B will be in contact with the beach half as often as point A. That is, the likelihood of the stone being in contact with the beach at a given point on the surface of the stone is proportional to the \textit{reciprocal of the expected waiting time} to hit that point. 

Assuming the random wave crest heights follow a Pareto distribution (a standard assumption; see below), the expected number of waves until a crest of height at least $h$ occurs is proportional to $h^\alpha$ , so the contact-likelihood is proportional to its reciprocal $1/h^\alpha$. Since the instantaneous rate of ablation at a point of contact is assumed to be proportional to the curvature   $\kappa$ at that point (again, sharper points erode faster), taking the constant of proportionality to be $1$  gives $\kappa/h^\alpha$, which yields the shape evolution equation (\ref{eq3}). (More formally, see Section \ref{sectionContactTime} below.)\\

The novelty of this model (\ref{eq3}) is that it considers the \textit{product} (rather than the sum) of curvature and distance driven terms and it derives the distance-driven term from the Pareto distribution of waves. 

As suggested in this thought experiment, the new PDE model (\ref{eq3}) is only intended to model the shape evolution of intermediate macroscopic-sized beach stones, not huge boulders or grain-sized particles. Thus this model is clearly not valid in the limit, where the beach stone eventually becomes another grain of sand. Also, (\ref{eq3}) does not address the changing size of the abrading stone, which in some river rock models has been shown to decrease exponentially in time (e.g., see \cite{stones45} and \cite{stones55}).

In model (\ref{eq3}), the different roles of the three essential rate-of-abrasion factors -- curvature at point of contact, global shape of the stone, and wave dynamics -- are readily distinguishable in the three variables $\kappa$, $h$, and $\alpha$. The variable $\kappa$ reflects the curvature at the point of contact, $h$ reflects the global shape of the stone via its evolving center of mass, and $\alpha$ reflects the intensity of the wave process (in fact, in the interpretation in Example \ref{exam1J23} below, $\alpha$ is an explicit decreasing function of the expected (mean) value of the wave crests).  For example, increasing the curvature at the point of contact affects neither the center of mass nor the wave dynamics, changing the center of mass affects neither the curvature at the point of contact nor the wave dynamics, and changing the wave dynamics affects neither the center of mass nor the curvature of the stone.  

 \begin{figure}[!htb]
\center\includegraphics[width=0.75\textwidth]{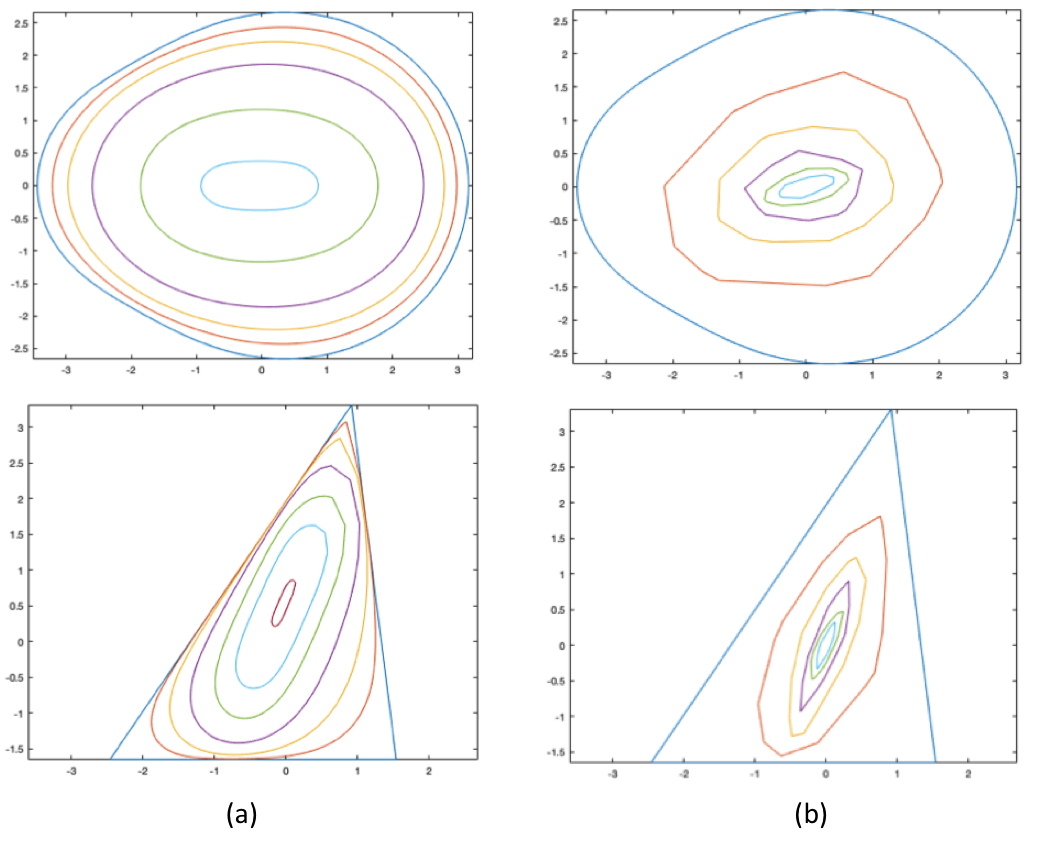}
\caption{Numerical approximations of the shapes of two intermediate-sized 2-d ``stones" evolving (a) under equation (\ref{eq3}) with 
$\alpha = 3$, and (b) under a discrete chipping analog of (\ref{eq3}), as will be discussed in Section  \ref{sectionMonteCarlo}.  These illustrations depict evolving shapes, not limiting or attracting shapes.}
 \label{newFig11}
\end{figure} 
Figure \ref{newFig11}(a) illustrates the same initial 2-d stones shown in Figure \ref{newFig3} evolving under the non-isotropic curvature and contact-likelihood equation (\ref{eq3}) with $\alpha=3$;  note the similarity of these oval shapes with the real beach stones in Figures \ref{fig7a} and \ref{fig6}.
As will be seen in Section \ref{sectionMonteCarlo} below, this same prototypical equation model  (\ref{eq3}) also appears to be robust in both 2-d and 3-d discrete-time discrete-state ``stochastic-slicing" models of the evolution of shapes of beach stones, as illustrated in Figures \ref{newFig11}(b) and \ref{allStones1}.

\begin{remark}\label{rem3b}
Replacing the curvature term $\kappa$ in equations (\ref{eq1}), (\ref{eq2}) and (\ref{eq3}) by its positive part $\kappa ^{+}  = \max\{\kappa, 0\}$ allows those abrasion models to be extended to non-convex initial shapes, and this is left to the interested reader (see Figure \ref{photo2}).
\end{remark}

\begin{figure}[!htb]
\center\includegraphics[width=0.85\textwidth]{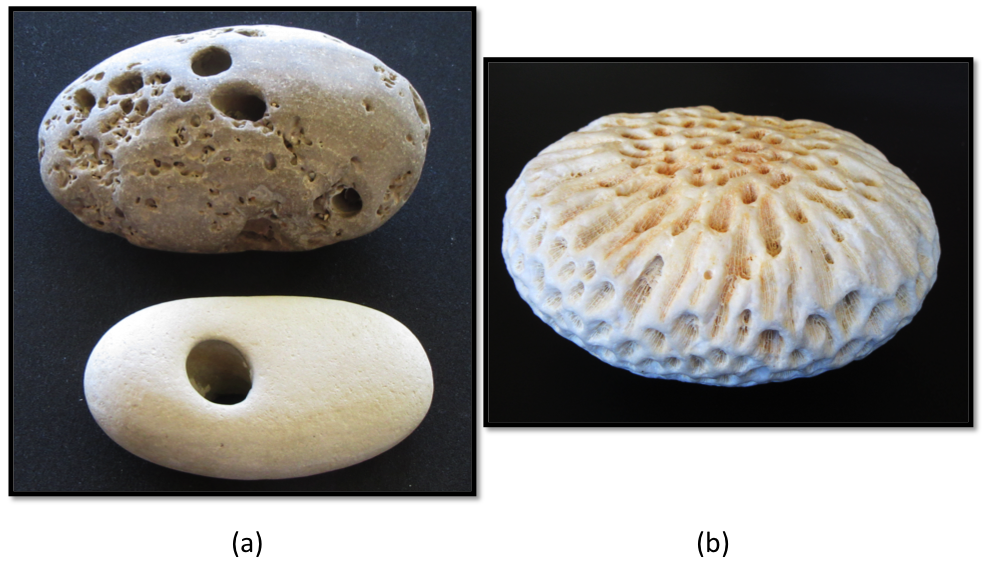}
\caption{Three isolated beach stones collected by the author illustrate the apparent prevailing oval shapes of beach stones even when the stone is not homogeneous.  The holes in the two stones in (a) were made by a boring clam \textit{triodana crocea} in the face of an underwater stationary rock wall or boulder at Monta{\~n}a de Oro State Park in California.  These oval-shaped ``holey" stones were formed when portions of those rocks with the clam holes broke off and were worn down by frictional abrasion with the beach. The coral stone in (b) is from a beach cave in Negril, Jamaica.
}
 \label{photo2}
\end{figure}

\begin{remark}\label{rem3}
Several other theories have also been proposed to model the evolution of non-spherical stone shapes. For example, \cite{stones5} 
 shows the existence of non-spherical limits for a process that involves a model with a combination of collisional abrasion, frictional abrasion and isotropic surface growth, and \cite{stones47} is a PDE model for purely collisional abrasion that also predicts non-spherical limit shapes. In fact, the evolving shape of the egg-shaped stone in Figure \ref{newFig11}(a) is strikingly similar to Figure 4 in \cite{stones44}, which is based on a model leading to the formation of elliptical stones by both grinding and rolling abrasion. For an overview of existing models of the geometry of abrasion, including their history, new extensions, and the mathematical relationship between various models, the reader is referred to \cite{stones45}. 
 \end{remark} 
 
 \section{Wave Dynamics}\label{waveDynamics}
 
The first step in formalizing the fundamental role played by the waves in this non-isotropic curvature and contact-likelihood frictional abrasion model (\ref{eq2}) is to define formally what is meant by a wave process. Real-life ocean or lake waves are random processes whose components (velocity, direction, height, temperature, etc.) vary continuously in time.  For the purposes of the elementary abrasion model introduced here, it will be assumed that the critical component of the wave is its energy, and for simplicity, that this is proportional to its height.

From a realistic standpoint, it is also assumed that wave energies are \textit{bounded}, i.e., not infinitely large, and that the wave crests (local maxima) are \textit{isolated}, that is, no finite time interval contains an infinite number of crests.  These simple notions lead to the following working definition.

\begin{dfn}
\label{dfn1}
A \textit{wave process $W$} is a bounded, continuous, real-valued stochastic process with isolated local maxima on an underlying probability space $(\Omega, \mathcal{F}, P)$, i.e., $W:\Omega \times \mathbb{R}^+ \rightarrow \mathbb{R}$ is such that:
\begin{equation}
\label{wp1}
\begin{split}
& \mbox{for each } \omega \in \Omega, W(\omega,\cdot): \mathbb{R}^+ \rightarrow \mathbb{R} \mbox{ is bounded and} \\
 &\mbox{ continuous  with isolated local maxima}; and \\
\end{split}
\end{equation}
\begin{equation}
\label{wp2}
\mbox { for each } t \ge 0, W(\cdot, t) \mbox{ is a random variable.}
\end{equation}
\end{dfn}

A standard assumption in oceanography, (e.g., see \cite{stones42},  \cite{stones38},  \cite{stones40}, and  \cite{stones41}),
is that the wave crests have a Pareto distribution. The next example describes a wave process with this property, and, as seen above, this Pareto distribution plays a key role in the basic heuristics underlying the physical intuition for equation (\ref{eq3}).

\begin{exam}\label{4junExample2}
\vspace{1em}
\noindent
$W(t) = X_{\left \lfloor{t} \right \rfloor} sin (2\pi t)$,
where $\left \lfloor{t} \right \rfloor = \max\{n:n \leq t\}$, and $X_1, X_2, $ $X_3, \ldots$ are {i.i.d.} Pareto random variables 
with {c.d.f.} $P(X_j \leq x) = 1 - ({x_0}/x) ^2$ for all $x \geq x_0 > 0$; see Figure  \ref{waveBrownian}. 
In contrast to the wave models in \cite{stones43} and \cite{stones44}, here $W$ is  \textit{not periodic}, as real waves are not,  since  the sequence of wave crest heights of $W$ are {i.i.d.} Pareto random variables. \end{exam} 

Note that unlike Brownian motion, which is also a continuous-time continuous-state stochastic process, a wave process is in general not Markov for the simple reason that the current instantaneous state of the process alone may not indicate whether the wave is rising or falling. 

\begin{figure}[!htb]
\center\includegraphics[width=0.75\textwidth]{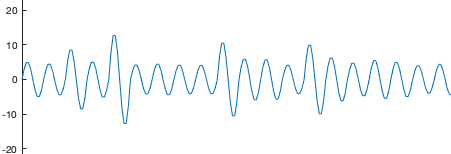}
\caption{A sample path of a stochastic wave process with Pareto distribution as in Example \ref{4junExample2}. Note that the process is not periodic, which plays a crucial role in the theory presented here.}
 \label{waveBrownian}
\end{figure}

Another key assumption about the wave process necessary for a physically realistic frictional abrasion process to follow the curvature and contact-likelihood model (\ref{eq2}) is that \textit{the long-term behavior of the wave process is in equilibrium} (\textit{steady state}). To put this in context, recall that for a continuous function $X:\mathbb{R}^+ \rightarrow \mathbb{R}$, the \textit{occupation measure} (or \textit{occupation time of $X$ up to time $s$})  is the function $T_{X,s} = T_s$ defined by 
\begin{equation*}
T_s (B) = m(\{0 \leq t \leq s:X(t) \in B\}) = \int_0^s I_B (X(t)) dt  \quad  \mbox { for all Borel } B \subset \mathbb{R}, 
\end{equation*}
where $m$ denotes Lebesgue measure on $\mathbb{R}^1$, and $I_B$ is the indicator function of $B$. 

\subsection*{Wave Steady-State Assumption}

In addition to the wave continuity assumption (\ref{wp1}), it is assumed that the wave process  $\{W(\omega, t): t \ge 0\}$
 has a limiting average occupation measure $\mu_W$, i.e., there is a Borel probability measure $\mu_W$  on $\mathbb{R}$  satisfying 
 
\begin{equation} \label{assumption2}
{\mu_W}(B) = \lim_{s \to \infty} \frac{1}{s} m(\{0 \le t \le s : W(\cdot,t) \in B\}) 
 \mbox{ a.s. for all Borel }
B \subset \mathbb{R}.
\end{equation}

 Note that assumption (\ref{assumption2}) is essentially a strong law of large numbers, and implies for instance that $W$ is not going off to infinity, or forever oscillating on average between several different values.

\begin{exam}\label{4junExample1}
Suppose $W$ is a wave process with Pareto distribution as in Example \ref{4junExample2}.  Then the maximum heights of the wave intervals $\{X_j \sin(2 \pi t):t \in [j, j+1); j \ge 1\}$ are $X_1, X_2, \ldots,$ respectively, which by assumption are i.i.d. Pareto with $P(X_j > x) = (x_0/x)^2 $ for all $x \ge x_0$.  Thus by the Glivenko-Cantelli Theorem, the equilibrium limiting distribution of the wave crests of $W$ has this same Pareto distribution. 
\end{exam}

\section{Contact-Likelihood Function}\label{section5}

The next step in relating the underlying wave process $W$ to equation (\ref{eq2}) is to describe the relationship of $W$ to the contact-likelihood function $\lambda$, which involves the direction of the point of abrasion on the stone as a function of the underlying time-dependent stochastic wave process $W$. For ease of exposition and grapical illustration, in this section ``stones" will be depicted in a 2-d setting.

Even for homogeneous and strictly convex stones, the role played by the contact-likelihood function $\lambda$ distinguishes the dynamics of the evolution of shape given by the non-isotopic model (\ref{eq2}) from isotropic distance-driven models like (\ref{may7eq3}) and from isotropic curvature-driven models such as (\ref{eq1}), as is illustrated in Figure \ref{newFig10}.

A (2-dimensional) \textit{stone} is a compact convex set $K \subset \mathbb{R}^2$ with non-empty interior $int(K)$.  Let $c = c_K \in int(K)$ denote the center of mass (barycenter) of $K$, and let $S^1$ denote the unit ball $S^1=\{(x,y) \in \mathbb{R}^2:x^2+y^2=1\}$.

\begin{dfn}\label{dfn2}
An \textit{oriented stone} $\gamma$ is an embedding $\gamma:S^1 \rightarrow \mathbb{R}^+$ with the origin taken as the barycenter of the convex hull of the graph of $\gamma$. Let $\mathbb{S}$ denote the set of all oriented stones.
\end{dfn}

The point of abrasion of a stone with the beach as a result of an incoming wave depends not only on the size and shape of the stone, but also on the wave energy and the orientation of the stone with the beach when the wave hits.  Note that the same wave may act on different orientations of the same stone to bring it into contact with the beach at different points on its surface.  

\begin{dfn}\label{dfn3}
An \textit{abrasion direction function} $D$  is a continuous function $D: \mathbb{S} \times \mathbb{R} \rightarrow S^1$.
\end{dfn}
 The value $D(\gamma,z)$ specifies the unit direction of the abrasion plane (the beach) resulting from a wave with energy $z \in \mathbb{R}$ acting on the oriented stone $\gamma$.  In other words, $D(\gamma, z)$ specifies which direction of $\gamma$ will be ``down" after $\gamma$ is hit by a wave with energy $z$. 
Figure \ref{newFig16}(a) illustrates typical values of $u_1$ and $u_2$ of the abrasion function $D$ of the oriented stone $\gamma$ after impact by two waves with different wave energies $z_1$ and $z_2$, respectively, resulting in two different points of contact with the beach, $u_1$ and $u_2$, at distances $h(u_1)$ and $h(u_2)$ from the center of mass.  

\begin{figure}[!htb]
\center\includegraphics[width=0.85\textwidth]{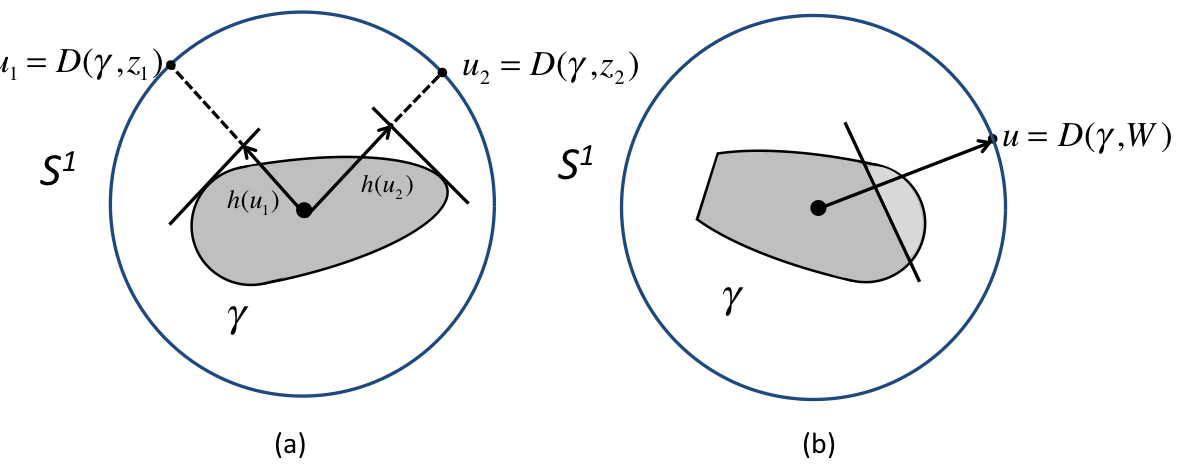}
\caption{In (a) an oriented stone $\gamma$  hit by two different waves with parameters $z_1$ and $z_2$, respectively, results in two different directions of contact with the abrasive plane (beach), $u_1 = D(\gamma, z_1)$ and $u_2 = D(\gamma , z_1)$. Analogously, (b) illustrates the same process in a discrete slicing framework as will be discussed in Section \ref{sectionMonteCarlo} below.} 
 \label{newFig16}
\end{figure}

Recall that $m$ denotes Lebesgue measure.

\begin{prop}
\label{may7prop3}
Given an oriented stone $\gamma$, a wave process $W$, and an abrasion direction function $D$, the function 
$\Lambda = \Lambda(\gamma, W, D):(\Omega, \mathcal{F}) \rightarrow [0,1]$ given by
\begin{equation} \label{assumption4}
\Lambda(B) = \lim_{s \to \infty} \frac{1}{s}m(\{0 \le t \le s : D(\gamma,W(\cdot,t)) \in B \})
\mbox { for all Borel } B \subset S^1
\end{equation}
almost surely defines a Borel probability measure on $S^1$.
\end{prop}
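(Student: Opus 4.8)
The plan is to recognize $\Lambda$ as the \emph{pushforward} (image measure) of the wave occupation measure $\mu_W$ under the fixed direction map, and then to read off that the pushforward of a probability measure is automatically a probability measure. First I would fix an oriented stone $\gamma \in \mathbb{S}$ and abbreviate $D_\gamma := D(\gamma, \cdot) : \mathbb{R} \to S^1$; by Definition \ref{dfn3} this map is continuous, hence Borel measurable. The key observation is that the quantity in (\ref{assumption4}) is nothing but the normalized occupation time of the $S^1$-valued process $t \mapsto D_\gamma(W(\cdot,t))$, which is simply the wave process $W$ read through the fixed measurable map $D_\gamma$. So the whole proposition should reduce to the steady-state behavior of $W$ already guaranteed by assumption (\ref{assumption2}).

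The core step is an explicit change of variable. For any Borel $B \subset S^1$ the preimage $D_\gamma^{-1}(B) \subset \mathbb{R}$ is Borel, and for every $\omega$ and $s>0$,
\begin{equation*}
\{0 \le t \le s : D_\gamma(W(\omega,t)) \in B\} = \{0 \le t \le s : W(\omega,t) \in D_\gamma^{-1}(B)\}.
\end{equation*}
Applying the Wave Steady-State Assumption (\ref{assumption2}) to the Borel set $D_\gamma^{-1}(B)$ in place of $B$ then yields, almost surely,
\begin{equation*}
\Lambda(B) = \lim_{s \to \infty} \frac{1}{s} m(\{0 \le t \le s : W(\cdot,t) \in D_\gamma^{-1}(B)\}) = \mu_W(D_\gamma^{-1}(B)).
\end{equation*}
That is, $\Lambda$ equals the image measure $\mu_W \circ D_\gamma^{-1}$. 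Since $D_\gamma^{-1}$ sends disjoint unions to disjoint unions and satisfies $D_\gamma^{-1}(S^1) = \mathbb{R}$, both countable additivity and the total mass $\Lambda(S^1) = \mu_W(\mathbb{R}) = 1$ are inherited directly from $\mu_W$, so $\Lambda$ is a Borel probability measure on the compact space $S^1$.

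I expect the one genuine obstacle to be the order of the quantifiers \emph{almost surely} and \emph{for all $B$}. The display above produces, for each \emph{fixed} Borel $B$, a full-measure event on which $\Lambda(B) = \mu_W(D_\gamma^{-1}(B))$; but to assert that $\Lambda$ is a bona fide measure one needs a \emph{single} full-measure event $\Omega_0$ on which the limit in (\ref{assumption4}) exists simultaneously for all (uncountably many) Borel $B \subset S^1$. The clean resolution is to read (\ref{assumption2}) as already furnishing such a single event for $W$ --- i.e., that almost surely the normalized occupation measures $s^{-1}T_{W,s}$ converge to $\mu_W$ setwise on \emph{every} Borel subset of $\mathbb{R}$ --- in which case the pushforward computation closes verbatim on the corresponding event for $\gamma$, because $D_\gamma^{-1}(B)$ is Borel for every Borel $B \subset S^1$. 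If one instead has only per-set a.s.\ convergence, intersecting over a countable generating $\pi$-system of arcs with rational endpoints yields a.s.\ convergence on that system and, by tightness on the compact space $S^1$, weak convergence of the pushforward occupation measures to $\mu_W \circ D_\gamma^{-1}$; this identifies the weak limit but falls short of the setwise convergence for \emph{every} Borel $B$ claimed in (\ref{assumption4}). Thus the single-event reading of (\ref{assumption2}) is precisely what the proposition requires, and compactness of $S^1$ is what rules out any escape of mass and guarantees the limiting measure has total mass one.
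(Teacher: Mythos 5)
Your proposal is correct and takes essentially the same approach as the paper's proof: fix $\gamma$, use continuity (hence Borel measurability) of $D(\gamma,\cdot)$ to rewrite $\{0 \le t \le s : D(\gamma,W(\cdot,t)) \in B\}$ as $\{0 \le t \le s : W(\cdot,t) \in D(\gamma,\cdot)^{-1}(B)\}$, and apply the steady-state assumption (\ref{assumption2}) to that Borel preimage, exhibiting $\Lambda$ as the pushforward of $\mu_W$ under $D(\gamma,\cdot)$. The only difference is that you spell out what the paper dismisses as ``routine''---countable additivity, total mass one, and especially the quantifier-order point that (\ref{assumption2}) must be read as supplying a single almost-sure event on which convergence holds for all Borel sets simultaneously, since otherwise the uncountable family of per-set null events would obstruct the conclusion---which is a more careful treatment than the paper itself gives.
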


\begin{proof}
Fix a Borel set $B$  in $S^1$.  Recall by (\ref{wp2}) that for all $t \ge 0$, $W(\cdot,t)$ is a random variable.
By (\ref{wp1}) and Definition \ref{dfn3},   $D(\gamma, \cdot)$ is continuous, and hence Borel measurable, so there exists a Borel set $\hat{B}$  in $\mathbb{R}$ such that 

\begin{equation}
\label{may7eq2}
D(\gamma, W (\cdot, t))  \in B) \iff W(\cdot, t) \in \hat{B} \, \mbox { for all } t \ge 0.
\end{equation}
By the wave steady-state assumption (\ref{assumption2}), the limit in (\ref{assumption4}) exists and equals $\mu_W (B)$ a.s., so since $\mu_W$ is a probability measure, $0 \le \Lambda(B) \le 1$ a.s.  The demonstration that $\Lambda$ is a.s. a measure is routine.
\end{proof}
\vspace{1em}

The probability measure $\Lambda$ in Proposition \ref{may7prop3} is the \textit{occupation measure} 
(cf.~\cite{stones36}) of the steady-state likelihood (average time) that the oriented stone $\gamma$ is in contact with the abrasive plane in various directions, assuming that the rate of abrasion is negligible.  For example, if $I \in S^1$ is an interval of unit directions, then $\Lambda(I)$ is the probability that the oriented stone $\gamma$ is in contact with the beach in direction $u$ for some $u \in I$.  

If $\Lambda$ is absolutely continuous (with respect to Lebesgue measure on $S^1$), then $\lambda$, the Radon-Nikodym derivative of $\Lambda$ with respect to the uniform distribution on $S^1$, is the \textit{local time} (cf.~\cite{stones37}) of the stochastic process $D(\gamma, W)$.  That is, 
$\lambda = d\Lambda/dm$ is the density function of the distribution of the occupation measure.  In some instances, as will be seen in the next section, $\lambda$ may be approximated by a simple function of $\gamma$, in particular, of the support function $h$ of $\gamma$.  

\section{A Model With Continuous Contact-Likelihood and Abrasion}\label{sectionContactTime}

Here, the energy required to produce frictional abrasion of a stone on the beach is assumed to come only from the waves, which lift and slide the stone against the beach (recall that in this simple model, collisional abrasion with other stones is assumed negligible).  To lift the stone in Figure \ref{fig3} to abrasion position (c) requires more energy than to lift it to position (b), and (b) requires more energy than (a). Thus the expected likelihood (or frequency that) the stone is in position (c) is less than that in (b), and (b) less than (a). This means that for these three points of contact, the value of the contact-likelihood function $\lambda$ is decreasing from (a) to (c);  the actual numerical values of $\lambda$ at these points of course also depend on the external wave process.
\begin{figure}[!htb]
\center\includegraphics[width=0.8\textwidth]{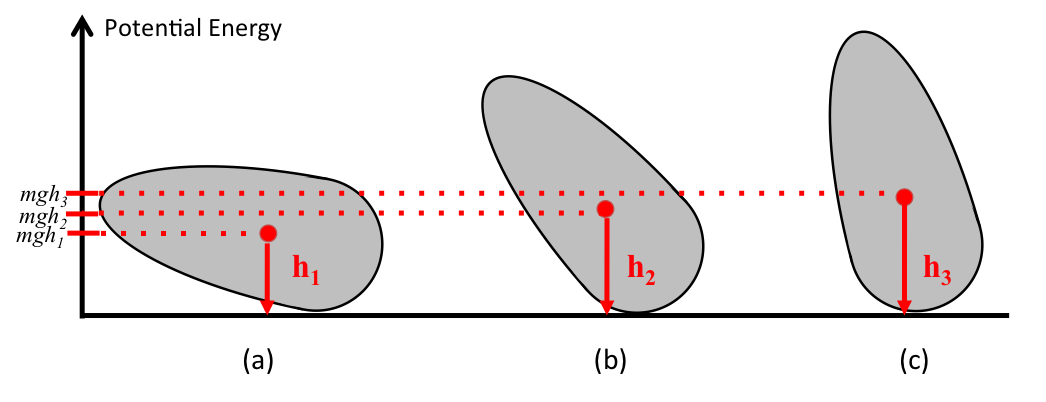}
\caption{The distances $h$ from the center of gravity of the stone in the direction of the normal to the tangent contact plane are proportional to the potential energies of the stone in that position, and hence proportional to the wave energy necessary to lift the stone to that position.}
 \label{fig3}
\end{figure}

Consider a model of the curvature and contact-time ablation equation (\ref{eq2}) where the ablation process is assumed to be continuous in time and space, i.e., a curve-shortening process (cf.~\cite{stones35}).  As before, the incoming wave crest of  $W$ lifts the stone to a position determined by the wave parameters (e.g., kinetic energy of the crest), where its surface is ablated incrementally.  

Fix $t>0$, and suppose that the oriented stone $\gamma = \gamma(t)$ is smooth and strictly convex, i.e., the non-empty interior of $\gamma$ is strictly convex with smooth $(C^\infty)$ boundary.  Since the support function $h$ is continuous, there exist $0 < h_{min} < h_{max} < \infty$ so that on $\gamma$ 
\begin{equation}\label{eq2J23}
\mbox{range}(h) = [h_{min}, h_{max}] \subset \mathbb{R}^+.
\end{equation}
Let $\Lambda$ denote the occupation measure of the likelihood function of the abrasion direction process as in Proposition \ref{may7prop3}, and let $X_\Lambda$ denote a random variable with values in the unit sphere and with distribution $\Lambda$, i.e., for all intervals of unit directions $I$, 
$P(X_{\Lambda} \in I) = \Lambda(I)$ represents the likelihood that $\gamma$'s direction of contact with the planar beach at time $t$ is in $I$. Assuming that $W$ and $D$ are continuous ((\ref{wp1}) and Definition \ref{dfn3}), it is routine to check that since $\gamma$ is strictly convex, the random direction $X_{\Lambda}$ is absolutely continuous.  Thus $X_{\Lambda}$ has a (Borel) density function $\lambda : S^1 \rightarrow \mathbb{R}^+$ satisfying $P(X_{\Lambda} \in I) = \int_I \lambda(u) du$ for all intervals $I \subset S^1$.

Let $Y_\Lambda$ denote the random variable $Y_\Lambda = mghX_\Lambda$, where $m$ is the mass (e.g., volume, or area in the 2-d setting) of $\gamma$ and $g$ is the force of gravity. Thus $Y_\Lambda$ represents the potential energy of $\gamma$ when $X_\Lambda$ is the direction of contact of the 
stone $\gamma$ with the abrasive plane, i.e., when $X_\Lambda$ is the ``down" direction at time $t$.  Then (\ref{eq2J23}) implies that 
\begin{equation}\label{eq3J23}
\mbox{range}(Y_{\Lambda}) = [mgh_{min}, mgh_{max}] \subset \mathbb{R}^+.
\end{equation}

Assuming that the wave crest energies (relative maxima) are converted into potential energy of the stone in the corresponding ``down" positions (see Figure \ref{fig3}), this implies that the distribution of $Y_\Lambda$, given that $Y_\Lambda$ is in $[mgh_{min}, mgh_{max}]$, is the same as the distribution of the successive wave crests of $W$ (see Figure \ref{waveBrownian}) given that they are in  $[mgh_{min}, mgh_{max}]$.

\vspace{1em}

Ignoring secondary effects such as multiple rolls of the stone, this yields an informal physical explanation for equation (\ref{eq3}) with $\alpha = 3$, as is seen in the next example.

\begin{exam}\label{exam1J23}
Suppose that $\gamma$ is smooth and strictly convex and that $W$ is a wave process as in Example \ref{4junExample2}, with the $\{X_j\}$ {i.i.d.} Pareto random variables satisfying $P(X_j > x) = (x_0/x)^2$ for all $x \geq x_0$ for some $x_0 > 0$.  Then the sequence $X_1, X_2, \ldots$ represents the values of the successive crests (relative maxima) of $W$, 
i.e., $X_j = \max\{W(\cdot, t) : t \in [j, j+1)\}$ (see Figure \ref{waveBrownian}).

This implies that for all $x_0<x_1<x_2$, the conditional distribution of each $X_j$ given that $X_j$ has values in $[x_1, x_2]$ is an absolutely continuous random variable with density proportional to $1/x^3$ for $x \in [x_1, x_2]$, i.e., there is a $c > 0$ so that 
\begin{equation}\label{eq4J23}
P(X_j  \in I  \mid  X_j \in [x_1, x_2]) = c \int_I  \frac{1}{x^3} dx  \mbox{ for all } I = (a_1, a_2) \subset [x_1, x_2].
\end{equation}
Letting $Y_j$ denote the maximum potential energy of the stone $\gamma$ during time period $[j, j+1)$, then $Y_j = mgh(X_j)$ (see Figure \ref{fig3}).  Again assuming that the wave energy at its crests are converted into potential energy of the stone (see Figure \ref{3blocks}), (\ref{eq4J23}) implies that $Y_j$ is also absolutely continuous with density proportional to $1/h^3$ for $h \in [h_{min}, h_{max}]$, so  (\ref{eq2}) yields (\ref{eq3}) with $\alpha=3$. Note that as the stone gets smaller, the factor $1/h^3$ remains unchanged, but is applied to new values of $h_{min}$ and $h_{max}$. This suggests that stones of different sizes on the same beach, i.e., subject to the same (Pareto) wave process, will abrade toward the same (renormalized) shapes; see Figure \ref{fig6} and Remark \ref{rem2}.  
\end{exam}

\begin{remark}\label{rem4}

Note that the model in (\ref{eq3}) is not valid for extremely small values, e.g. when the size of the beach stone is below the Pareto threshold $x_0$ of the wave. Intuitively, when the beach stone becomes extremely small, it is comparable to one of the grains of sand that make up the beach, and is subject to different dynamics such as collisional abrasion and fracturing.
\end{remark}

\section{A Discrete Contact-likelihood and Abrasion Model}\label{sectionMonteCarlo}

In actual physical frictional abrasion, of course, the evolution of the shape of a stone is not continuous in time, since the ablated portions occur in discrete packets of atoms or molecules. For isotropic frictional abrasion, this has been studied in \cite{stones47}, \cite{stones46}, and \cite{stones26}, where analysis of the evolution of the rounding of stones uses Monte Carlo simulation and a ``stochastic chipping" process.  The goal of this section is to present an analogous stochastic discrete-time analog of the evolution of a stone's shape under the basic isotropic curvature and contact-likelihood equation (\ref{eq3}), where again discrete portions of the stone are removed at discrete steps (see Figure \ref{newFig16}(b)), but now where the effects of both the global shape of the stone (via $h$) and the wave dynamics (via $\alpha$) are also taken into account.

To see how a contact-likelihood function $\lambda$ may be discrete and explicitly calculated (or approximated), consider the 2-dimensional rectangular ``stone" in Figure \ref{3blocks}.  Without loss of generality, $x_1 < x_2$ and $m = 2/g$, so the potential energy of the stone in position (a) is $x_1$ and the potential energy in position (c) is $x_2$.

\begin{figure}[!htb]
\center\includegraphics[width=0.85\textwidth]{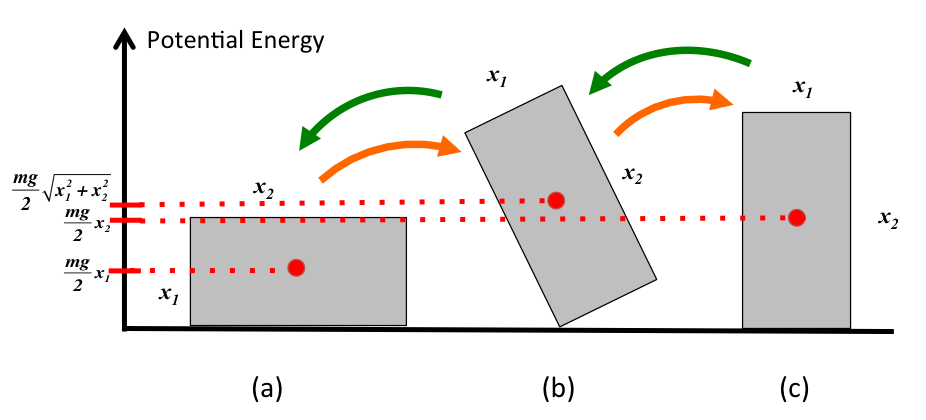}
\caption{A rectangular stone has stable positions of equilibrium shown in (a) and (c); more energy is required to move the stone from position (a) to (c) than to move it from (c) to (a).}
 \label{3blocks}
\end{figure}

Let $W$ be a Pareto wave process as in Example \ref{4junExample2}; recall Figure \ref{waveBrownian} for a sample path. Then the crests (maximum wave heights) of $W$ are the {i.i.d.} random variables $\{X_j; j \in \mathbb{N} \}$. Let $\bar{F}$ denote the complementary cumulative distribution function of $X_1$, i.e., $\bar{F} = P(X_1 > x)$ for all $x \ge 0$.

Suppose first that the stone is on a longer $x_2$-side (Figure \ref{3blocks}(a)) at time $j \in \mathbb{N}$. Then it flips onto an $x_1$-side (Figure \ref{3blocks}(c)) during the time interval $[j, j+1)$ if and only if the value of $X_j$ is greater than the energy required to lift the stone from position (a) to position (b), i.e., is enough to increase the potential energy of the stone from $x_1$ to more than $\sqrt{{x_1}^2 + {x_2}^2}$.

Since the $\{X_j\}$ are {i.i.d.}, this implies (ignoring multiple flips) that the number of waves until a flip occurs from (a) to (c) is a geometric random variable $N_1$ with parameter $\bar{F} (\sqrt{{x_1}^2 + {x_2}^2} - x_1):= p_1$, so the expected value of $N_1$ is $E(N_1) = 1/p_1$.  Similarly, the expected number of waves $E(N_2)$ until a flip occurs from a shorter $x_1$-side (Figure \ref{3blocks}(c)) to an $x_2$-side is $1/p_2$, where $p_2 = \bar{F} (\sqrt{{x_1}^2 + {x_2}^2} - x_2) > p_1$.

Thus by the strong law of large numbers, the limiting frequency of time that the stone is on side $x_1$ is less than the relative frequency of time the stone is on side $x_2$, since 
\begin{equation*}
\frac{E(N_2)}{E(N_1) + E(N_2)} = \frac{1/p_2}{1/p_1 + 1/p_2} 
= \frac{p_1}{p_1 + p_2} <  \frac{p_2}{p_1 + p_2} = \frac{E(N_1)}{E(N_1) + E(N_2)}.
\end{equation*}

\vspace{1em}
\begin{exam}\label{jun13ex1}
Suppose the 2-dimensional stone is as in Figure \ref{3blocks} with $x_1 = 6$ and $x_2 = 8$, and the relative maxima (crests) of the wave process $W$ are as in Example \ref{4junExample2}. Then 
\begin{equation*}
p_1 = \bar{F}(4) = \frac{c}{4^2} > \frac{c}{2^2} = \bar{F}(2) = p_2,
\end{equation*}
so the likelihood that the stone is on a short side ($x_1$ or its opposite side) is $\frac{{2^2}/c}{{2^2}/c + {4^2}/c} = 0.2$ and the likelihood the stone is on a long side ($x_2$ or its opposite) is $0.8$.  This implies that in terms of the \textit{oriented} stone as in Figure \ref{3blocks}(a), the contact likelihood function $\lambda$ at time $t$ satisfies $\lambda(t, (1,0)) = \lambda(t, (-1,0)) = 0.1,  \lambda(t, (0,1)) = \lambda(t, (0,-1)) = 0.4$, and
$\lambda(t, u) = 0$ for $u \notin \{(1,0), (-1,0), (0, 1), (0, -1)\}$.
\end{exam}

In this setting, as illustrated in Figure \ref{newFig16}(b), an oriented stone $\gamma$ is hit by a wave $W$ resulting in the unit direction of contact $u=D(\gamma, W)$ of $\gamma$ with the abrasive plane, at which time a small fixed fraction $\delta$ of the volume of the stone is ground off in that direction.  (Recall as illustrated in Figure \ref{fig1} that removing a fixed fraction of the stone in a given direction is analogous to removing a portion proportional to its curvature there.) The evolving stones in this discrete stochastic framework are eventually random convex polygons (polyhedra), for which almost every point on the surface has curvature zero.  Thus this assumption that fixed proportions are removed, rather than portions proportional to curvature, seems physically intuitive.

Figure \ref{newFig11}(b) illustrates the results of a Monte Carlo simulation of this stochastic-slicing process evolving under the discrete-time analog of equation (\ref{eq2}) in the special case (\ref{eq3}) with $\alpha = 3$ for the same two initial 2-d stone shapes as in Figure \ref{newFig3}. Here, the direction of ablation is again selected at random, not uniformly (isotropically), but inversely proportional to the cube of the distance in that direction from the center of mass to the tangent plane (line). Note the apparent similarity of the evolving oval shapes in both the continuous and discrete settings, as seen in Figure \ref{newFig11}. 

An analogous Monte Carlo simulation of this same frictional abrasion process is illustrated in the 3-d setting in Figure \ref{allStones1} where  two initial shapes, a smooth convex egg-shaped body and a non-regular tetrahedron, are undergoing a discrete-time analog of the same basic non-isotropic curvature and contact-likelihood model (\ref{eq3}) with $\alpha = 3$.  Similar to the analysis in \cite{stones26} where a discrete-time {\textit stochastic chipping} model of the isotropic curvature-driven equation (\ref{eq1}) was used to study the rate at which initial 3-d shapes converge toward spheres, the evolving body here repeatedly has sections of a fixed proportion $\delta$ of the volume removed at each step by a planar cut, in a random direction, normal to the support function in that direction. In this case, however, in sharp contrast to that in \cite{stones26}, the abrasion is non-isotropic with the likelihood of abrasion in a given direction inversely proportional to the cube of the distance from the center of mass of the stone to the supporting plane in that direction (cf.\ Example \ref{exam1J23}).

\begin{figure}[!htb]
\center\includegraphics[width=0.85\textwidth]{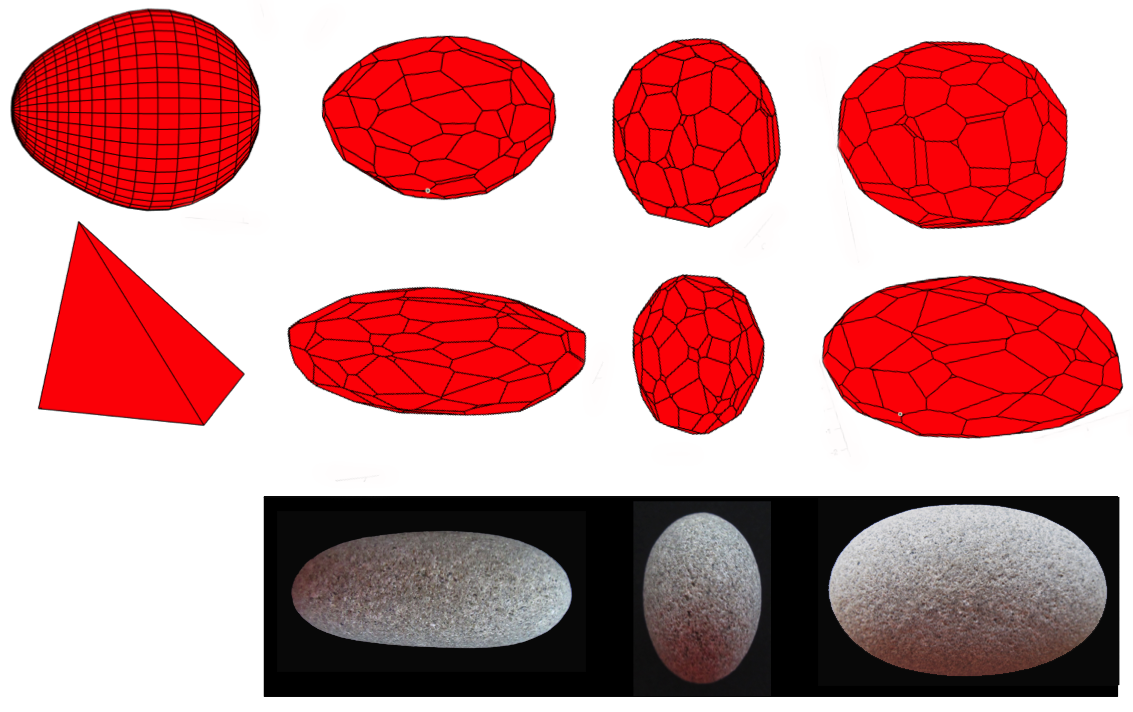}
\caption{Monte Carlo simulations, in the 3-d setting, analogous to the 2-d results illustrated in Figure \ref{newFig11}(b), where fixed proportions of the volume are sliced off in random directions, with the directions chosen inversely proportional to the cube of the distance from the center of mass, i.e. a discrete analog of (\ref{eq3}) with $\alpha=3$. For comparison, the corresponding ``side", ``end", and ``top" views of one of the natural beach stones in Figure \ref{fig6} (right) are shown at bottom.} 
 \label{allStones1}
\end{figure}

\section{Limiting Shapes in a Continuous Abrasion Setting}\label{sectionLimitingShapes}

Recall that, as the numerical approximations in Figure \ref{newFig3}(b) illustrated in the 2-d setting, the limiting shape of stones under curvature-only ablation (\ref{eq1}) is spherical, and when normalized, is the unit sphere.  Similarly, if the contact-likelihood function $\lambda$ in (\ref{eq2}) is constant, then the process is isotropic and the author conjectures that the limiting shape will also be spherical.  

For non-isotropic (non-constant) contact likelihood functions $\lambda$, however, the limiting shape depends on $\lambda$, and this shape may sometimes be determined or approximated as follows.
First, it is routine to check that the re-normalized shapes will remain the same if and only if $h=h(u, t)$ satisfies
\begin{equation}
\label{eqShape1}
\frac{{\partial h}}{{\partial t}} = - c h 
\end{equation} 
for some  $c> 0$. 
Equating the term $\partial{h}/\partial{t}$ in equation (\ref{eqShape1}) with the same term in (\ref{eq2}) yields the \textit{shape equation}
\begin{equation}
\label{eqShape2}
\kappa = c\frac{h}{\lambda}.
\end{equation} 

Suppose that the underlying wave crests have a Pareto distribution with $\alpha > 1$, and that the ablation process results solely from the conversion of the energy of the wave process $W$ into the potential energy of the stone, by lifting it to the position where abrasion will occur.  
Then, as seen in Example \ref{exam1J23}, $\lambda$ is proportional to  $h^{-\alpha}$.  With (\ref{eqShape2})  this yields the \textit{limiting shape equation} 
\begin{equation}\label{eqShape3}
\kappa = ch^{\alpha+1}.
\end{equation}

Numerical solutions of (\ref{eqShape3}) for $c=1$ and $\alpha=2.5,3,4$ are shown in Figure \ref{fig5}.  Note that flatter ovals correspond to Pareto waves with smaller means (i.e., with lighter tails), that is, as physical intuition suggests, more powerful waves produce more spherical limiting shapes.

Since $\kappa = (h + h'')^{-1}$, and since $h$ is the distance to the center of mass, note that (\ref{eqShape3}) is a non-local ordinary differential equation. 
 
\begin{figure}[!htb]  
\center\includegraphics[width=0.85\textwidth]{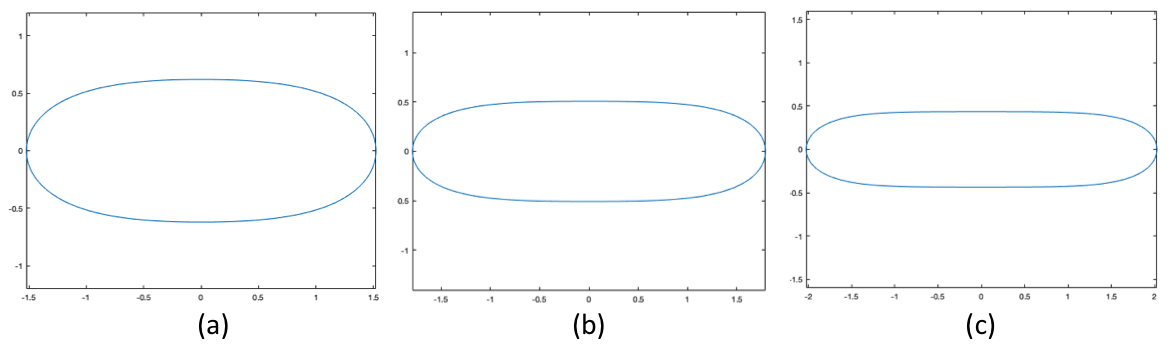}
\caption{Plots of numerical solutions of the limiting shape equation (\ref{eqShape3}) for $\alpha = 2.5, 3.0$ and $4.0$, respectively.}
 \label{fig5}
\end{figure}

Note that the oval shapes in Figure \ref{fig5} appear very similar to the non-elliptical ovals found by Rayleigh shown in Figure \ref{fig7a} in his empirical data in both natural specimens of beach stones and in his laboratory experiments. There is also a close resemblance of these same shapes to those in Fig. 4 of \cite{stones5}, which studies the evolving shapes of carbonate particles that are both growing from chemical precipitation and eroding from physical abrasion.

The equations for these ovals, the solutions of (\ref{eqShape3}), are not known to the author.  
Moreover, as Rayleigh noted, ``the principal section of the pebble lies outside the ellipse drawn to the same axes, and I have not so far found any exception to this rule among artificial pebbles shaped by mutual attrition, or among natural pebbles'' \cite{stones10}.

\begin{figure}[!htb]
\center\includegraphics[width=0.85\textwidth]{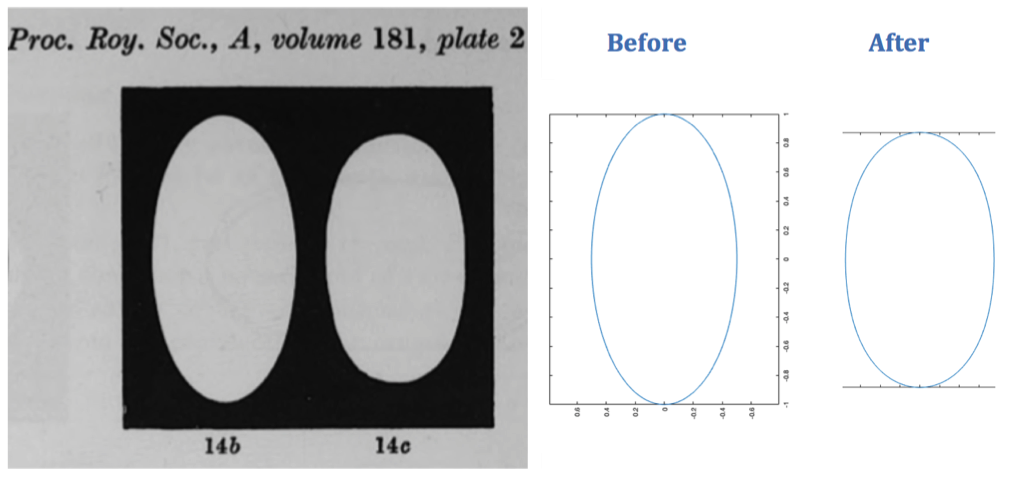}
\caption{Actual before and after shapes (left) of a stone worn by frictional abrasion in a laboratory, as recorded by Lord Rayleigh in 1942.
 Rayleigh specifically noted that the limiting shapes are not ellipses, and demonstrated this experimentally by starting with a stone with an elliptical shape (14b), which after ablation assumed the non-elliptical shape shown in (14c) \cite{stones9}. The graphics on the right illustrate how closely a numerical solution to equation (\ref{eq3}) with $\alpha=2.2$ approximates his findings in a 2-d setting.}
 \label{4junRayleigh}
\end{figure}

More concretely, Figure \ref{4junRayleigh} illustrates the limiting shapes predicted by the prototypical model in equation (\ref{eq3}) with the empirical laboratory data reported by Rayleigh \cite{stones9}.  The stone (14b) in Figure \ref{4junRayleigh} is the near-elliptical actual stone he subjected to frictional abrasion, and to its right (14c) is the same stone after ablation.  In the two curves on the right in Figure \ref{4junRayleigh}, the one on the left is an exact ellipse with minor axis 0.5 and major axis 1.0 centered at the origin, and to its right is the evolved shape after curve shortening via the curvature and contact-likelihood equation (\ref{eq3}) with $\alpha = 2.2$.  Note the striking resemblance of the experimental results with the model presented in the simple equation (\ref{eq2}) with $\lambda = h^{- \alpha}$.   

Also note that the chipping process depicted in Figure \ref{allStones1}, even when renormalized at each step, does not have a limiting shape since at each successive iteration a fixed proportion of the volume is chipped off. In this scenario the author conjectures that there is a \textit{limiting distribution} of (convex polyhedral) shapes, and those illustrated in Figure \ref{allStones1} are representative cases.

\section{Conclusions}\label{sectionConclusions}

The non-isotropic model of the evolution of the shapes of beach stones introduced here is meant as a starting point to include the effects of both the global (non-local) shape of the stone and the wave dynamics into the process. The main equations are simple to state, but  as non-local partial integro-differential equations, they are difficult to solve exactly, and no solutions are known to the author. Numerical approximations in the continuous-time continuous-state framework using standard curve-shortening algorithms, and in the discrete-time discrete-state framework using Monte Carlo simulation, both indicate remarkably good agreement with the shapes of both natural and artificial stones undergoing frictional abrasion on a flat plane. 

\section*{Acknowledgements}
The author is grateful to Professors Pieter Allaart, Arno Berger, G{\'a}bor Domokos, Lester Dubins, Ron Fox, Ryan Hynd, Kent Morrison, and Sergei Tabachnikov for many helpful comments, to Professor Donald Priour for access to his 3-d ``stochastic-chipping" code, to an anonymous referee for many corrections and helpful suggestions,  and especially to John Zhang for his excellent work on the Monte Carlo simulations and curve-shortening numerics presented here, and for many helpful ideas, suggestions, and questions. 

%%%%%%%%%%%%%%%%%%

\clearpage
%
% Appendix 
%
\section*{Appendix - Pseudocode for Figures}

% Figure B
\noindent
\begin{tabbing}
\textbf{Figure \ref{newFig3}(a): } 
\=  Set $S$ = one of the two shown 2-d stones (outer curves) $S$ in Figure \ref{newFig3}.\\
\> START\\
\>Calculate the center of mass $c_s$ of $S$.\\
\>Compute incremental new shape $S_1$ using a stable explicit scheme (no tangential motion) \\
\> \indent for curve-shortening of $S$ under $dh/dt = -h^2$, where $h$ is the support function of $S$ \\
\> \indent with $c_s$ as origin. \\
\>Set $S = S_1$, return to START.
\end{tabbing}
\noindent
Code: \textit{Aristotle.m}
\vspace{1.5em}

% Figure C
\noindent
\begin{tabbing}
\textbf{Figure \ref{newFig3}(b): } 
\= Fix the origin O, and center all the stones so that center of mass is O. \\
\>Set $S$ = one of the two 2-d stones (outer curves) $S$ in Figure \ref{newFig3}. \\
\>START \\
\>Compute incremental new shape $S_1$ using a semi-implicit finite difference scheme \\
\> \indent for curve-shortening of $S$ under $dh/dt = -k$, where $h$ is the support function \\
\> \indent of $S$ with origin at O. [Note: this scheme takes a $C^1$, closed, embedded plane \\
\> \indent curve and deforms it for the life of the flow.] \\
\>Set $S = S_1$, return to START.
\end{tabbing}
\noindent
Code: \textit{New\_CSF\_Semi\_Implicit\_6.m}
\vspace{1.5em}

% Figure D
\noindent
\begin{tabbing}
\textbf{Figure \ref{newFig11}(a):} 
\=Same as Figure \ref{newFig3}(a) except using $dh/dt = -k/h^3.$
\end{tabbing}
\noindent
Code: \textit{New\_CSF\_Semi\_Implicit\_6.m}
\vspace{1.5em}

% Figure H
\noindent
\begin{tabbing}
\textbf{Figure \ref{newFig11}(b):} 
\=Set $S$ = one of the two 2-d stones (outer curves) $S$ in Figure \ref{newFig11}. \\
\>START \\
\>Calculate the center of mass $c_s$ and the area  $A_s$ of $S$. \\
\>Shift the shape so that its center is at the origin. \\
\>Generate a random angle  $\theta$ uniformly in $[0, 2\pi]$, and let  $\theta_j = {\theta + 2\pi j}/8, j = 1, \ldots, 8$.\\
\>Choose an angle $\Theta$  at random among the ${\theta_j}$  inversely proportional to $h^{3}(\theta_j)$ , where \\
 \> \indent $h$ is the support function of $S$ with origin at (0, 0). \\
\> Compute distance $d$ of the line perpendicular to $\Theta$, in the direction of $\Theta$ \\
\> \indent from $c_s$, so that it cuts off 0.01$A_s$. \\ 
\> Compute new shape $S_1$ after this cut. \\
\>Set $S = S_1$, return to START.
\end{tabbing}
\noindent
Code: \textit{DiscretizedStones.m}
\vspace{1.5em}

% Figure E
\noindent
\begin{tabbing}
\textbf{Figure \ref{waveBrownian}:} 
\=Fix a period $P = 2\pi$ \\
\>Generate $N = 20$ independent Pareto values $X_1, \ldots,X_{20}$ with mean 2. \\
\>Generate standard sin wave values. \\
\>For the $j^{th}$ period, multiply by $X_j$. 
\end{tabbing}
\noindent
Code: \textit{may2waves.m}
\vspace{1.5em}

% Figure 3D
\noindent
\begin{tabbing}
\textbf{Figure \ref{allStones1}:} 
\= 1. First, produce the initial shape. We do this by denoting all vertices of the polygon, \\
\> \indent and then creating a mesh-grid out of those vertices (library does this by finding \\
\> \indent the convex shape with vertices, faces). For the eggshape and ellipsoid, we pass \\
\> \indent in the spherical coordinates and allow the function to create the mesh-grid. \\
\> \indent For the trapezoid, we use a library function and immediately pass in the \\
\> \indent mesh-grid values. \\
\> 2.  Calculate the original volume \\
\>3. Initialize xyz-coordinates of 12 equally spaced points, chosen as vertices of the \\
\> \indent isocahedron. \\
\>4. START \\
\>5. Center the shape \\
\>6. Create a random 3D rotation of the 12 vertices, using the yaw, pitch, and roll \\ 
\> \indent rotation matrices \\
\>7. Calculate the distance h to the polygon surface in these 12 directions, and \\
\> \indent choose a direction with probability proportional to $1/h^3$. \\
\>8. If deterministic, move cuts along this direction incrementally, stopping when a  \\
\> \indent perpendicular plane cuts away delta*volume\_of\_shape on the previous iteration. \\ 
\> \indent If random, choose a distance uniformly. Inspect the perpendicular cut made at this \\
\> \indent distance along the chosen direction. Accept this cut with probability exponentially \\
\> \indent decreasing in volume cut away, proportioned so that the average ratio of volume \\
\> \indent cut is delta. \\
\>9. Determine the new volume, return to START.
\end{tabbing}
\noindent
Code: \textit{PolygonSlicing3D.m}
\vspace{1.5em}

% Figure G
\noindent
\begin{tabbing}
\textbf{Figure \ref{fig5}: } 
\=Set $S$ = an ellipse with minor axis 0.7, major axis 1, centered at the origin. \\
\>START \\
\>Compute incremental new shape $S_1$ using a semi-implicit finite difference scheme \\
 \> \indent    for curve-shortening of $S$ (with no tangential motion) under $dh/dt = -k/h^{expNum}$, \\
\> \indent     where $expNum$ is a variable input to the program: 2.5 for (a), 3 for (b), and 4 for (c). \\
\>Resize the shape to retain the same area. \\
\>STOP IF all coordinates of the current shape differ from all coordinates of the previous    \\     
\> \indent by less than $10^{-6}$, i.e., the limiting shape of this equation has been reached. \\
\>Set $S = S_1$, return to START. 
\end{tabbing}
\noindent
Code: \textit{Numerical\_Solve\_Curve\_2.m}
\vspace{1.5em}

%Figure K
\noindent
\begin{tabbing}
\textbf{Figure \ref{4junRayleigh}: }
\= Same as Figure \ref{fig5}, except $S$ = an ellipse with minor axis 0.5, major axis 1, \\
\> \indent and expNum = 2.2.
\end{tabbing}
\noindent
Code: \textit{Numerical\_Solve\_Curve\_2.m}

%
% end Appendix
%


\begin{thebibliography}{10}

\bibitem{stones1}
{\sc Andrews, B.}
\newblock Evolving convex curves.
\newblock {\em Calc. Var. Partial Differential Equations 7}, 4 (1998),
  315--371.

\bibitem{stones12}
{\sc Andrews, B.}
\newblock Gauss curvature flow: the fate of the rolling stones.
\newblock {\em Invent. Math. 138}, 1 (1999), 151--161.

\bibitem{stones2}
{\sc Andrews, B.}
\newblock Classification of limiting shapes for isotropic curve flows.
\newblock {\em J. Amer. Math. Soc. 16}, 2 (2002), 443--459.

\bibitem{stones27}
{\sc Andrews, B., McCoy, J., and Zheng, Y.}
\newblock Contracting convex hypersurfaces by curvature.
\newblock {\em Calc. Var. Partial Differential Equations 47}, 3--4 (2013),
  611--665.

\bibitem{stones30}
{\sc Aristotle}.
\newblock Mechanica.
\newblock In {\em The Oxford translation of the complete works of Aristotle},
  W.~D. Ross, Ed., vol.~6. Clarendon Press, Oxford, 1913.

\bibitem{stones17}
{\sc Ashcroft, W.}
\newblock Beach pebbles explained.
\newblock {\em Nature 346}, 6281 (1990), 227--227.

\bibitem{stones37}
{\sc Bj{\"o}rk, T.}
\newblock The pedestrian's guide to local time, 2015.

\bibitem{stones50}
{\sc Black, W.~T.}
\newblock On rolled pebbles from the beach at {D}unbar.
\newblock {\em Transactions of the Edinburgh Geological Society 3}, 1 (1877),
  122--123.

\bibitem{stones7}
{\sc Bloore, F.~J.}
\newblock The shape of pebbles.
\newblock {\em Journal of the International Association for Mathematical
  Geology 9}, 2 (1977), 113--122.

\bibitem{stones18}
{\sc Bluck, B.~J.}
\newblock Sedimentation of beach gravels: {Examples from South Wales}.
\newblock {\em J. Sediment. Res. 37}, 1 (1967), 128--156.

\bibitem{stones19}
{\sc Carr, A.~P.}
\newblock Size grading along a pebble beach: {Chesil Beach, England}.
\newblock {\em J. Sediment. Res. 39}, 1 (1969), 297--311.

\bibitem{stones42}
{\sc Chen, B.-y., Zhang, K.-y., Wang, L.-p., Jiang, S., and Liu, G.-l.}
\newblock {Generalized Extreme Value-Pareto Distribution Function and Its
  Applications in Ocean Engineering}.
\newblock {\em China Ocean Eng. 33}, 2 (2019), 127--136.

\bibitem{stones35}
{\sc Deckelnick, K., and Dziuk, G.}
\newblock On the approximation of the curve shortening flow.
\newblock In {\em Calculus of Variations, Applications and Computations},
  C.~Bandle, M.~Chipot, J.~S.~J. Paulin, J.~Bemelmans, and I.~Shafrir, Eds.
  Longman Scientific \& Technical, Essex, U.K., 1994, pp.~100--108.
\newblock Pont-{\`a}-Mousson.

\bibitem{stones16}
{\sc Dobkins, Jr., J.~E., and Folk, R.~L.}
\newblock Shape development on {Tahiti-Nui}.
\newblock {\em J. Sediment. Res. 40}, 4 (1970), 1167--1203.

\bibitem{stones8}
{\sc Domokos, G., and Gibbons, G.~W.}
\newblock The evolution of pebble size and shape in space and time.
\newblock {\em Proc. Math. Phys. Eng. Sci. 468}, 2146 (2012), 3059--3079.

\bibitem{stones45}
{\sc Domokos, G., and Gibbons, G.~W.}
\newblock {The Geometry of Abrasion}.
\newblock In {\em New Trends in Intuitive Geometry}, G.~Ambrus,
  I.~B{\'a}r{\'a}ny, K.~J. B{\"o}r{\"o}czky, G.~F. T{\'o}th, and J.~Pach, Eds.,
  vol.~27 of {\em Bolyai Society Mathematical Studies}. J{\'a}nos Bolyai
  Mathematical Society and Springer-Verlag GmbH, Budapest, 2018, pp.~125--153.

\bibitem{stones47}
{\sc Domokos, G., Sipos, A.~{\'A}., and V{\'a}rkonyi, P.~L.}
\newblock Countinuous and discrete models for abrasion processes.
\newblock {\em Periodica Polytechnica Architecture 40}, 1 (2009), 3--8.

\bibitem{stones20}
{\sc Durian, D.~J., Bideaud, H., Duringer, P., Schr\"oder, A., Thalmann, F.,
  and Marques, C.~M.}
\newblock What is in a pebble shape?
\newblock {\em Phys. Rev. Lett. 97\/} (Jul 2006), 028001.

\bibitem{stones54}
{\sc Feh{\'e}r, E., Domokos, G., and Krasukopf, B.}
\newblock Computing critical point evolution under planar curvature flows,
  2020.

\bibitem{stones3}
{\sc Firey, W.~J.}
\newblock Shapes of worn stones.
\newblock {\em Mathematika 21}, 41 (1974), 1--11.

\bibitem{stones14}
{\sc Gage, M.~E.}
\newblock Curve shortening makes convex curves circular.
\newblock {\em Invent. Math. 76}, 2 (1984), 357--364.

\bibitem{stones36}
{\sc Geman, D., and Horowitz, J.}
\newblock Occupation densities.
\newblock {\em Ann. Probab. 8}, 1 (1980), 1--67.

\bibitem{stones13}
{\sc Ghosh, P.~K., and Kumar, K.~V.}
\newblock Support function representation of convex bodies, its application in
  geometric computing, and some related representations.
\newblock {\em Comput. Vis. Image Underst. 72}, 3 (1998), 379--403.

\bibitem{stones4}
{\sc Hamilton, R.~S.}
\newblock Worn stones with flat sides.
\newblock {\em Discourses Math. Appl. 3\/} (1994), 69--78.

\bibitem{stones48}
{\sc Kavallaris, N.~I., and Suzuki, T.}
\newblock {\em Non-Local Partial Differential Equations for Engineering and
  Biology}.
\newblock Springer International Publishing, Cham, Switzerland, 2018.

\bibitem{stones46}
{\sc Krapivsky, P.~L., and Redner, S.}
\newblock Smoothing a rock by chipping.
\newblock {\em Phys. Rev. E 75\/} (Mar 2007), 031119.

\bibitem{stones21}
{\sc Krynine, P.~D.}
\newblock On the antiquity of ``sedimentation" and hydrology {(with Some Moral
  Conclusions)}.
\newblock {\em Geol Soc Am Bull. 71}, 11 (1960), 1721--1726.

\bibitem{stones39}
{\sc Landon, R.~E.}
\newblock An analysis of beach pebble abrasion and transportation.
\newblock {\em J. Geol. 38}, 5 (1930), 437--446.

\bibitem{stones22}
{\sc Lorang, M.~S., and Komar, P.~D.}
\newblock Pebble shape.
\newblock {\em Nature 347}, 6292 (1990), 433--434.

\bibitem{stones38}
{\sc Mackay, E. B.~L., Challenor, P.~G., and Bahaj, A.~S.}
\newblock A comparison of estimators for the generalised {P}areto distribution.
\newblock {\em Ocean Eng. 38\/} (2011), 1338Ð--1346.

\bibitem{stones26}
{\sc Priour, Jr., D.~J.}
\newblock Time scales for rounding of rocks through stochastic chipping, 2020.

\bibitem{stones5}
{\sc Sipos, A.~A., Domokos, G., and Jerolmack, D.~J.}
\newblock Shape evolution of ooids: a geometric model.
\newblock {\em Sci. Rep. 8}, 1 (2018), 1758.

\bibitem{stones55}
{\sc Sipos, A.~A., Domokos, G., and T{\"o}r{\"o}k, J.}
\newblock Particle size dynamics in abrading pebble populations.
\newblock {\em Earth Surf. Dynam. 9}, 2 (2021), 235--251.

\bibitem{stones40}
{\sc Stansell, P.}
\newblock Distributions of extreme wave, crest and trough heights measured in
  the {North Sea}.
\newblock {\em Ocean Eng. 32}, 8 (2005), 1015--1036.

\bibitem{stones9}
{\sc Strutt (Lord~Rayleigh), R.~J.}
\newblock The ultimate shape of pebbles, natural and artificial.
\newblock {\em Proc. Math. Phys. Eng. Sci. 181}, 985 (1942), 107--118.

\bibitem{stones10}
{\sc Strutt (Lord~Rayleigh), R.~J.}
\newblock Pebbles, natural and artificial, their shape under various conditions
  of abrasion.
\newblock {\em Proc. Math. Phys. Eng. Sci. 182}, 991 (1944), 321--335.

\bibitem{stones11}
{\sc Strutt (Lord~Rayleigh), R.~J.}
\newblock Pebbles of regular shape and their production in experiment.
\newblock {\em Nature 154}, 3901 (August 1944), 169--171.

\bibitem{stones41}
{\sc Teixeira, R., Nogal, M., and O'Connor, A.}
\newblock On the suitability of the generalized {P}areto to model extreme
  waves.
\newblock {\em Journal of Hydraulic Research 56}, 6 (2018), 755--770.

\bibitem{stones29}
{\sc Wald, Q.~R.}
\newblock The form of pebbles.
\newblock {\em Nature 345\/} (1990), 211.

\bibitem{stones24}
{\sc Williams, A.~T., and Caldwell, N.~E.}
\newblock Particle size and shape in pebble-beach sedimentation.
\newblock {\em Mar. Geol. 82}, 3 (1988), 199--215.

\bibitem{stones43}
{\sc Winzer, K.}
\newblock On the formation of elliptic stones due to periodic water waves.
\newblock {\em Eur. Phys. J. B. 86\/} (2013), 464.

\bibitem{stones44}
{\sc Winzer, K.}
\newblock The temporal formation and the shape of ellipsoidal stones on the
  beaches of the oceans.
\newblock {\em Eur. Phys. J. Plus 132\/} (2017), 443.

\end{thebibliography}
\end{document}